\newcommand{\RR}{\mathbb{R}}
\newcommand{\NN}{\mathbb{N}}
\newcommand{\R}{\mathcal{R}}
\renewcommand{\L}{\mathcal{L}}
\newcommand{\eps}{\varepsilon}
\def\area{\mathrm{area}}
\def\slab{\mathrm{slab}}
\def\bb{\mathrm{bb}}
\def\dist{\mathrm{dist}}
\def\top{\mathrm{top}}
\def\bot{\mathrm{bot}}
\def\optpoints{\mathrm{\kappa^*}}
\def\optarea{\mathrm{area^*}}
\newtheorem{theorem}{Theorem}
\newtheorem{lemma}[theorem]{Lemma}
\newtheorem{corollary}[theorem]{Corollary}
\def\section{\@startsection {section}{1}{\z@}%
  {-3.5ex plus -1ex
    minus -.2ex}{2.3ex plus .2ex}{\large\bf}}
\def\subsection{\@startsection{subsection}{2}%
  {\z@}{-3.25ex plus
    -1ex minus -.2ex}{1.5ex plus .2ex}{\normalsize\bf}}
\def\@fnsymbol#1{\ensuremath{\ifcase#1\or 1\or 2\or 3\or 4\or
    5\or 6\or 7 \or 8\ or 9 \or 10\or 11 \else\@ctrerr\fi}}
\let\geq\geqslant
\let\leq\leqslant
\let\ge\geqslant
\let\le\leqslant
\title{Covering Many Points with a Small-Area Box}
\author{Mark de Berg\thanks{Department of Computer Science, TU
    Eindhoven, the Netherlands. 
    Supported by the Netherlands Organization 
    for Scientific Research (NWO) under project no.~024.002.003.}
  \and
  Sergio Cabello\thanks{Department of Mathematics, IMFM, and Department
    of Mathematics, FMF,  
    University of Ljubljana, Slovenia.
    Supported by the Slovenian Research Agency, program P1-0297 and projects J1-8130 and J1-8155.}
  \and 
  Otfried Cheong\thanks{School of Computing, KAIST, Korea.
    Supported  by ICT R\&D program of
    MSIP/IITP~[R0126-15-1108].}
  \and 
  David Eppstein\thanks{Computer Science Department, University of
    California, Irvine. 
    Supported in part by NSF grants CCF-1228639, 
    CCF-1618301, and CCF-1616248.}
  \and 
  Christian Knauer\thanks{Computer Science Department, University of
    Bayreuth, Germany. 
    Supported in part by DFG grant \mbox{Kn~591/3-3}.}
}
\begin{document}
\maketitle

\begin{abstract}
  Let $P$ be a set of $n$ points in the plane. We show how to find,
  for a given integer $k>0$, the smallest-area axis-parallel rectangle
  that covers $k$ points of $P$ in $O(nk^2 \log n+ n\log^2 n)$ time.
  We also consider the problem of, given a value $\alpha>0$, covering
  as many points of $P$ as possible with an axis-parallel rectangle of
  area at most $\alpha$. For this problem we give a probabilistic
  $(1-\eps)$-approximation that works in near-linear time: In
  $O((n/\eps^4)\log^3 n \log (1/\eps))$ time we find an axis-parallel
  rectangle of area at most $\alpha$ that, with high probability,
  covers at least $(1-\eps)\optpoints$ points, where $\optpoints$ is
  the maximum possible number of points that could be covered.
    
  \medskip
  \noindent\textbf{Keywords:} geometric optimization, covering,
  axis-parallel rectangle, near-linear time, approximation algorithm.
\end{abstract}

\section{Introduction}
In this paper we consider two closely related shape-fitting problems
for a given point set~$P$ in the plane. In both problems we are
searching for an axis-parallel rectangle, or a \emph{box} as we will
call it, and we are interested in the trade-off between the box area
and the number of points covered by the box. More precisely, we are
interested in the following two optimization problems.
\begin{itemize}
\item Given a set $P$ of points and an integer $k\ge 2$, find 
  \[ 
  \optarea(P,k) ~=~ \min \big\{ \area(R) \mathrel{\big|} \text{$R$ is a box
    with}~|R\cap P| \ge k \big\}.
  \]
  That is, we are interested in covering at least $k$ points of $P$
  with a box of minimum area.
\item Given a set $P$ of points and a real value $\alpha>0$, find 
  \[ 
  \optpoints(P,\alpha) ~=~ \max \big\{ |R\cap P|\mathrel{\big|}
  \text{$R$ is a box with}~\area(R)\le \alpha \big\}. 
  \]
  That is, we are interested in covering the maximum number of points
  of $P$ with a box of area at most $\alpha$.
\end{itemize}
The two problems are closely related because for all finite point sets $P$,
and all $k\in \NN$ and $\alpha\in \RR_{>0}$ we have 
\[
    \optarea(P,k)\le \alpha \Longleftrightarrow \optpoints(P,\alpha) \ge k.
\]
So the second problem can be solved using binary search
on~$k$ and a solution to the first problem.

When minimizing the area of the box covering $k$ points, the set of
optimal solutions is invariant under scaling of either of the axes.
This means that, if we consider any map $(x,y)\mapsto
\varphi(x,y)=(\alpha_1 x+\beta_1,\alpha_2 y+\beta_2)$ with
$\alpha_1,\alpha_2\neq 0$, then a box~$R$ is an optimal solution for
$\optarea(P,k)$ if and only if $\varphi(R)$ is a solution for
$\optarea(\varphi(P),k)$. Thus, minimizing the area is especially
useful when the units of each axis have incomparable meanings. In
contrast, in such a case it is meaningless to minimize the perimeter.

The problem of covering $k$ points with a minimum-area (or
minimum-perimeter) box was previously considered by Segal and
Kedem~\cite{segked-98}, who provided an algorithm suitable for values
of $k$ close to $n$, with running time $O(n + k(n-k)^2)$. In contrast,
we study the case when $k$ is small, so that it is preferable to
decrease the dependence on $n$ at the expense of increasing the
dependence on $k$. For the case of small $k$, several
papers~\cite{AhnBDDKKRS11,DasGN05,segked-98} erroneously claim that
previous algorithms of Aggarwal~et al.~\cite{aiks-91} and Eppstein and
Erickson~\cite{epperi-94} solve the problem in running time
$O(k^2n\log n)$ or $O(n\log n+k^2n)$, respectively. However, these
previous algorithms apply only to the minimum-perimeter version of the
problem. They do not work for the minimum-area version, because they
are based on the fact that for the minimum-perimeter version, the
optimal subset of $k$ points can be found among the $O(k)$ nearest
neighbors to one of the points---something which is not true for the
minimum-area version. The same obstacle appears when trying to extend
the algorithms of Datta et al.~\cite{DattaLSS95} from the
minimum-perimeter to the minimum-area problem. For the minimum-area
problem that we study here, we cannot restrict our attention to sets
of nearest neighbors, and must use alternative methods to obtain our
time bounds. The results in those papers do not depend on the
mistaken claim, only the attribution of previous work is incorrect.

After our preprint was made public~\cite{BergCCEK16}, Kaplan et
al.~\cite{KaplanRS17} showed that the problem of covering $k$ points
with a minimum-area box can be solved in $O(n^{5/2}\log^2 n)$ time.
This is the first subcubic algorithm and it is more efficient than
previous results for a large range of $k$. For the minimum-perimeter
problem, they provide an algorithm running in $O(nk^{3/2}\log k \log
n)$ time. However, as they note, one of the steps used in their
algorithm does not work for the minimum-area problem. The difficulty
is essentially as we mentioned above: For the minimum-area problem, we
do not know how to transform the problem into $O(n/k)$ instances of
$O(k)$ points each.

There have been several works on minimizing the size of the smallest
\emph{disk} that contains $k$~points. Here it does not matter whether
we minimize the area or the perimeter. \mbox{Har-Peled} and
Mazumdar~\cite{Har-PeledM05} give a randomized algorithm to find a
disk that contains $k$ points in $O(nk)$ expected time, improving the
works by Efrat, Eppstein, Erickson, Matou{\v s}ek, and
Sharir~\cite{EfratSZ94,epperi-94,Matousek95}. In follow-up work,
\mbox{Har-Peled} and Raichel~\cite{Har-PeledR15} aim for fast
$(1+\eps)$-approximations. Das et.~al~\cite{DasGN05} consider
covering $k$ points with rectangles of arbitrary orientations.
\medskip

The problems that we are interested in, where we want to find an
optimal box of arbitrary aspect ratio, are relatively easy if we make
certain assumptions about the input. For instance, if we were given
the aspect ratio of an optimal box, we could rescale one axis to
reduce the problem to finding an optimal square, a problem that is
very similar to the problem for disks. Similarly, if we had, say, a
$2$-approximation to the aspect ratio of the optimal box, then we
could rescale one axis and reduce the search to fat boxes. In this
scenario, finding a smallest square box gives a constant-factor
approximation to the optimum fat box, and using a grid approach,
like in \mbox{Har-Peled} and Mazumdar~\cite{Har-PeledM05}, we only need to
solve $O(n/k)$ instances of size $O(k)$, which can be done in roughly
$O(nk^2)$ time. Thus, we can search for the optimal box with constant
fatness in roughly $O(nk^2)$ time. Also, if we assume that the
coordinates are integers between $0$ and a bound $U$, this approach
allows us to compute $\optarea(P,k)$ in roughly $O(nk^2 \log U)$ time,
by trying $O(\log U)$ different aspect ratios in geometric
progression. The main goal of our paper is to avoid any such
assumptions, and to still get similar running times.

\paragraph*{Our results.}
Here is a summary of our main results and an overview of the approach.
Let $P$ be a given set of $n$ points in the plane.
\begin{enumerate}[(a)]
\item We show how to find, for a given integer $k>0$, the value
  $\optarea(P,k)$ in $O(nk^2 \log n+ n\log^2 n)$ time. Within the
  same time bound we can also construct an optimal solution, that is,
  a box that contains $k$ points of $P$ and has area $\optarea(P,k)$.
  This is the only known algorithm with a near-linear dependency on
  $n$; see the discussion above. To achieve this result, we use a
  divide-and-conquer approach to generate $O(n\log n)$ subproblems,
  each with $O(k)$ points, where we only have to consider boxes that
  contain a fixed point on the boundary. The divide-and-conquer
  method resembles that by Aronov et al.~\cite{AronovES10}. These
  results are presented in Section~\ref{sec:optarea}. In fact, here
  we solve a slightly more general problem to enable some improvements
  in the running time of the problem considered next in~(b).
  
\item We give a randomized algorithm that, for a given value
  $\alpha>0$ and a parameter $\eps\le 1/2$, with high probability runs
  in time $O((n/\eps^4)\log^3 n \log (1/\eps))$ and returns a box that
  has area $\alpha$ and covers at least $(1-\eps)\optpoints(P,\alpha)$
  points of~$P$. Note that the running time is~$O(n\log^3 n)$ when
  $\eps$ is fixed. An overview of the approach is as follows (a
  similar high-level approach is used for example
  in~\cite{AgarwalHRSSW02,AronovH08,CabelloDP13,BergCH09}). First, we
  find a $4$-approximation to the value $\optpoints(P,\alpha)$. Then
  we use a random sample $S$ of $P$ of appropriate size such that,
  with high probability, $\optpoints(S,\alpha)=\Theta((1/\eps^2)\log
  n)$ and an optimal solution for $\optpoints(S,\alpha)$ contains
  $(1-\eps)\optpoints(P,\alpha)$ points of $P$. To slightly improve
  the running time, we avoid computing $\optpoints(S,\alpha)$ exactly
  for the random sample $S$, as we can afford to use a
  $(1-\eps)$-approximation instead, by performing a binary search,
  where at each step we have to decide whether $\optarea(S,k')\le
  \alpha$ for some given $k'$. An additional slight improvement is
  obtained using the observation that~$S$ is always the same, but the
  test values~$k'$ in the binary search change. These results are
  described in Section~\ref{sec:points}.
\end{enumerate}

\paragraph*{Notation and conventions.}

As noted, a \emph{box} is an axis-parallel rectangle. For a box~$R$, let
$\top(R)$ and $\bot(R)$ be its top and bottom edge. For a point $p\in
\RR^2$, we use $p_x$ and $p_y$ for its $x$- and $y$-coordinate,
respectively.

We assume that the point set is in \emph{general position}, meaning
that no two points have the same $x$-coordinate or the same
$y$-coordinate. This can be enforced by a symbolic perturbation of the
points. For example, we can index the points as $p_1,\dots,p_n$ and
replace each point $p_i$ with the point $p_i+(i\cdot \eps,i\cdot\eps)$
for an infinitesimal value $\eps>0$. When minimizing the area of the
box covering $k$ points, we drop in the resulting area any terms that
depend on $\eps$. When maximizing the number of points to be covered,
we allow boxes of area $\alpha+n\eps\rho$, where $\rho$ is the
perimeter of the bounding box of~$P$.

\section{Minimizing area for a given number of points}
\label{sec:optarea}

We will use the following result for batched reporting in $2$-sided
rectangles. See Figure~\ref{fig:rangek} for an example.

\begin{figure}
  \centerline{\includegraphics{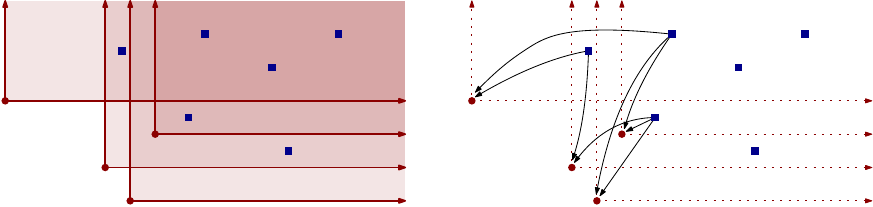}}
  \caption{Left: Example of the scenario considered in
    Lemma~\ref{lem:rangek}. Right: Points to be reported for each
    rectangle when $k$=2. An arrow indicates that the tail is reported
    for the head.}
  \label{fig:rangek}	
\end{figure}

\begin{lemma}
  \label{lem:rangek}
  Let $A$ and $B$ be sets of at most $n$ points in $\RR^2$. For each
  point $b\in B$, let $R_b$ be the 2-sided rectangle
  $[b_x,\infty)\times [b_y,\infty)$. In time $O(kn+ n \log n)$ we can
      find, for all $b\in B$, the $k$ points in $A\cap R_b$ with
      smallest $x$-coordinate.
\end{lemma}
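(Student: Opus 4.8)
The plan is to handle all the queries together by a plane sweep in decreasing order of $y$-coordinate, maintaining the $x$-coordinates of the already-swept points of $A$ in a dynamic ordered dictionary. First I would sort $A\cup B$ by $y$-coordinate, breaking ties so that points of $A$ precede points of $B$ that lie at the same level; this costs $O(n\log n)$ since $|A|,|B|\le n$. During the sweep I keep a balanced binary search tree $T$ keyed by $x$-coordinate, augmented with a parallel sorted doubly linked list of its elements so that from any node one can step to the element with the next-larger $x$-coordinate in $O(1)$ time. When the sweep reaches a point $a\in A$ I insert $a$ into $T$ (and splice it into the linked list at the position located by the search) in $O(\log n)$ time. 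When it reaches a point $b\in B$ I search $T$ for the successor of $b_x$, that is, the element with smallest $x$-coordinate that is at least $b_x$, and then walk forward through the linked list for up to $k$ steps, outputting the points encountered; this costs $O(\log n+k)$ time.

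For correctness, note that at the moment the sweep processes $b\in B$ the tree $T$ holds exactly the points $a\in A$ with $a_y\ge b_y$: points of $A$ are only inserted and never removed, and the tie-breaking rule ensures that points of $A$ sharing $b$'s $y$-level (which lie in the closed quadrant $R_b$ as far as the $y$-condition is concerned) have already been inserted. Among the points currently in $T$, the ones lying in $R_b=[b_x,\infty)\times[b_y,\infty)$ are precisely those with $a_x\ge b_x$, and scanning the sorted linked list forward starting from the successor of $b_x$ enumerates them in increasing order of $x$-coordinate. Hence the first $k$ points visited (or all of them, when $|A\cap R_b|<k$) are exactly the $k$ points of $A\cap R_b$ with smallest $x$-coordinate. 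Ties among equal $x$-coordinates inside $A$ are broken arbitrarily but consistently by $T$, which is harmless for the statement.

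For the running time, sorting is $O(n\log n)$; there are at most $n$ insertions at $O(\log n)$ each, for $O(n\log n)$ total; and there are at most $n$ queries at $O(\log n+k)$ each, for $O(nk+n\log n)$ total. Adding these gives the claimed bound $O(kn+n\log n)$, and an optimal solution (the actual point sets, if wanted) is produced along the way.

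The only step that needs a little care is the data-structural primitive ``report the $k$ smallest keys that are $\ge b_x$'' in $O(\log n+k)$ time while still supporting insertions; I realize it by pairing the balanced search tree with the sorted linked list (equivalently, by threading the tree), so that the $k$-step forward scan is worst-case $O(k)$ and I do not need to invoke an amortized-successor argument. Everything else is a standard sweep. In fact this lemma is just the batched form of top-$k$ three-sided range reporting (report the points in a quadrant, ordered by $x$, truncated to $k$), and the same structure would answer such queries online.
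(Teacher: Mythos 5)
Your proof is correct, and it reaches the same $O(kn+n\log n)$ bound by a sweep that is organized dually to the paper's. The paper sweeps left to right over the $x$-coordinates and makes the points of $A$ do the work: it keeps the not-yet-full rectangles $R_b$ in a search tree keyed by $b_y$, pushes each arriving point $a\in A$ onto the lists of the $m$ active rectangles that contain it in $O(m+\log n)$ time, and deletes a rectangle as soon as it has collected $k$ points (this deletion is what keeps the stabbing cost proportional to the output). Because the sweep is by $x$, each rectangle's points arrive already in increasing $x$-order. You instead sweep by decreasing $y$ and make the rectangles do the work: you maintain the swept points of $A$ in an $x$-ordered threaded balanced tree and answer each query $b$ in one shot by a successor search followed by a $k$-step walk, i.e., batched top-$k$ two-sided range reporting. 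Both decompositions are standard and give the same bound; yours has the minor advantages of answering each query immediately upon arrival (so it works online in one coordinate) and of not needing the ``remove when full'' bookkeeping, at the cost of requiring the threaded/linked-list augmentation to make the $k$-step scan worst-case $O(k)$. Your tie-breaking convention (insert a point as a member of $A$ before querying it as a member of $B$) matches the paper's convention that a point belonging to both sets counts as lying in its own rectangle, and is anyway moot under the paper's general-position assumption.
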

\begin{proof}
  The result can be obtained in a standard manner, using a sweep-line
  algorithm that sweeps the plane with a vertical line $\ell$ from
  left to right. For completeness we give the details.
    
  Let $A_\ell$ and $B_\ell$ be the points to the left of~$\ell$ of $A$
  and $B$, respectively. Consider the family of rectangles
  $\R_\ell=\{R_b \mid b\in B_\ell\}$. At each moment, we maintain the
  subset $\R'_\ell\subseteq \R_\ell$ of rectangles that do not contain
  $k$ points of $A_\ell$. The rectangles $R_b\in \R'_\ell$ are stored
  in a dynamic balanced binary search tree $T$ sorted by the value
  $b_y$. Moreover, for each rectangle $R_b\in\R'_\ell$ we also store
  a list $\L_b$ of the points of $A_\ell$ that it contains and the
  length of the list $\L_b$, that is, $|R_b\cap A_\ell|$.

  When the line $\ell$ arrives at a point $a\in A$, we find the $m$
  rectangles of $\R'_\ell$ that contain $a$. Traversing the tree $T$,
  this can be done in $O(m+\log n)$ time. For each of the $m$
  rectangles $R_b\in \R'_\ell$ that contain $a$, we add $a$ to the
  list $\L_b$. Moreover, if $\L_b$ now contains $k$ points, then
  $R_b$ does not belong to $\R'_\ell$ anymore and we remove the record
  from the tree $T$.
	
  When the line $\ell$ reaches a point $b\in B$, then $R_b$ becomes an
  element of $\R_\ell$ and we insert $R_b$ into $T$. If there is a
  point $a$ that belongs to $A$ and $B$, then we first consider it as
  a point of $B$ and then as a point of $A$. In this way $a$ becomes
  an element of $R_a$.
	
  Each insertion or deletion in $T$ takes $O(\log n)$. We make $|B|$
  insertions and at most $|B|$ deletions in $T$, for a total of
  $O(n\log n)$ time. For each point $a\in A$ we spend $O(\log n)$
  plus $O(1)$ time for each rectangle $R_b$ for which we report
  $a$. Thus, the running time is $O(kn+n\log n)$.
\end{proof}

For a set $Q$ of points, a point $q\in Q$, and a parameter~$k$ define
\begin{align*}
  \Phi(Q,q,k) := \min \ \{ \area(R) \ | \ &
  \text{$R$ is a box with $q\in \top(R)$ or $q\in \bot(R)$} \\
  & \text{and $R$ contains at least $k$ points of $Q$.} \ \}
\end{align*}

\begin{figure}
  \centerline{\includegraphics{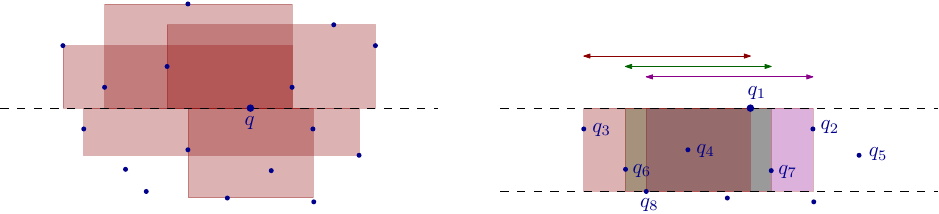}}
  \caption{Left: Example of the scenario considered in the definition
    of $\Phi(Q,q,k)$. 
    Some of the feasible boxes when $k=5$ are shown.
    Right: The boxes considered when $k=5$, $q\in
    \top(R)$ and we consider $Q_8 = \{q_1, \dots, q_8\}$. 
    Thus $q_8\in \bot(R)$. The span of the relevant boxes is
    shown with the arrows above.} 
  \label{fig:smallcase}	
\end{figure}

An example is shown in Figure~\ref{fig:smallcase}. We will reduce our
problem to many instances of the problem of computing $\Phi(Q,q,k)$
with $|Q|=O(k)$. We first discuss how to solve such instances.

\begin{lemma}
  \label{lem:smallcase}
  Given $Q$, $q$ and $k$, we can compute $\Phi(Q,q,k)$ in $O(|Q|^2)$ time.
\end{lemma}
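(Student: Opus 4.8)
The plan is to exploit the fact that, up to a reflection, the point $q$ lies on the \emph{bottom} edge of the optimal rectangle, and then to enumerate the position of the top edge. Concretely, $\Phi(Q,q,k)$ is the minimum of two quantities: the best area over rectangles $R$ with $q\in\bot(R)$, and the best area over rectangles with $q\in\top(R)$. The second quantity is computed exactly like the first after reflecting the plane about a horizontal line, so it suffices to handle the case $q\in\bot(R)$ within $O(|Q|^2)$ time. Write $m=|Q|$.

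Assume $q\in\bot(R)$. Then every point of $Q$ inside $R$ lies in the horizontal strip bounded below by $y=q_y$ and above by the line through $\top(R)$; in particular we may discard all points of $Q$ lying below $q$, and relabel the rest as $q=q_1,q_2,\dots,q_r$ in increasing order of $y$-coordinate, where $r\le m$ (if $r<k$ this case is infeasible). A routine shrinking argument lets us assume that in an optimal rectangle the top edge passes through some $q_j$: if it does not, lower it until it does, which loses no point since $q_j$ then lies on the edge and decreases the area. Likewise the two vertical edges may be assumed to pass through points of $Q$. Hence it suffices, for each $j\in\{k,\dots,r\}$, to find the smallest width of a rectangle whose bottom edge is on $y=q_y$, whose top edge is on $y=(q_j)_y$, and which contains at least $k$ points of $Q_j:=\{q_1,\dots,q_j\}$ (these are exactly the points of $Q$ in the corresponding strip); multiplying that width by $(q_j)_y-q_y$ gives a candidate area, and the answer for this case is the minimum over all $j$.

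Now fix $j$ and sort $Q_j$ by $x$-coordinate as $r_1,\dots,r_j$, with $q=r_t$. The horizontal extent of a feasible rectangle is an interval $[x_L,x_R]$ with $x_L\le q_x\le x_R$ that contains at least $k$ of the $r_i$; we may clearly take $x_L=(r_a)_x$ and $x_R=(r_b)_x$ for some $a\le b$, and then the constraint $x_L\le q_x\le x_R$ forces $a\le t\le b$. Moreover, whenever $b-a+1>k$ one can always pick a sub-window of exactly $k$ consecutive indices that still contains $t$ and whose width $(r_{a'+k-1})_x-(r_{a'})_x$ is no larger, so it is enough to minimize $(r_{a+k-1})_x-(r_a)_x$ over the $O(k)$ indices $a$ with $\max(1,t-k+1)\le a\le\min(t,j-k+1)$. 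Given the $x$-sorted order of $Q_j$, this costs $O(j)$ time; and maintaining that order incrementally as $Q_1\subset Q_2\subset\cdots$ (inserting $q_j$ into a sorted array takes $O(j)$ time) makes the total work $\sum_{j=1}^{r}O(j)=O(r^2)=O(m^2)$. Running the same computation for the reflected case yields the claimed $O(|Q|^2)$ bound.

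The only genuinely delicate part is the exchange/shrinking argument: verifying that an optimal rectangle can be assumed to have its top edge and both vertical edges passing through points of $Q$, and that among $x$-consecutive candidate windows only those of size exactly $k$ that contain $q$ need to be examined. Once those structural facts are in place, the rest is bookkeeping that comfortably fits in the $O(|Q|^2)$ budget.
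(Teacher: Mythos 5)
Your proof is correct and takes essentially the same approach as the paper: fix $q$ on one horizontal edge, enumerate the point on the opposite horizontal edge, and for each such choice run a linear sliding-window scan over the $x$-sorted points in the resulting strip. The only cosmetic difference is that you build the strips incrementally by insertion (growing away from $q$) whereas the paper starts from the full strip and repeatedly deletes the farthest point, but both yield the same $O(|Q|^2)$ bound.
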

\begin{proof}
  Let us discuss the case where $q\in \top(R)$, the other case being
  symmetric. Let $q_1, q_2, \dots, q_m$ be the points of~$Q$ whose
  $y$-coordinate is not larger than $q_y$, in decreasing order
  of~$y$-coordinate, and let $Q_i=\{q_1,\dots, q_i\}$.

  Once we have a sorted list with the elements of~$Q_i$ in increasing
  $x$-coordinate, then we can find in $O(|Q_i|) = O(i)$ time the
  minimum-area box $R$ that contains $k$ points with $q\in\top(R)$ and
  $q_i\in \bot(R)$, using a linear scan of the list with two pointers
  that are offset by $k$~elements. See Figure~\ref{fig:smallcase} for
  an example.

  We can therefore proceed as follows: We first compute the set~$Q_m$
  and sort it by $x$-coordinate, in time~$O(|Q| \log |Q|)$. We then
  repeatedly compute the best box for the current set~$Q_i$
  (initially~$i=m$) in time~$O(i)$, then delete from the list the
  element with the smallest~$y$-coordinate to obtain~$Q_{i-1}$, again
  in time~$O(i)$. The total running time is~$O(|Q|^2)$.
\end{proof}

For a set $P$ of points, a horizontal line $\ell$, and a parameter~$k$
define
\begin{align*}
  \Psi(P,\ell,k) := \min \ \{ \area(R) \ | \ &
  \text{$R$ is a box intersecting $\ell$} \\
  & \text{such that $R$ contains at least $k$ points of $P$} \ \}
\end{align*}

Recall that $\optarea(P,k)$ is the area of the optimal solution for
the original, global problem. Thus, it is obvious that
$\optarea(P,k)\le \Psi(P,\ell,k)$ for all $P$, $\ell$ and $k$. The
following lemma explains that when an optimal, global solution is
intersected by the line $\ell$, then we can reduce the search to a few
small problems of size $O(k)$.

\begin{lemma}
  \label{lem:fewsmallcases}
  Given $P$, $\ell$, and $k$, we can compute in $O(kn+ n \log n)$ time
  sets $Q_p\subseteq P$, indexed by $p\in P$, with the following
  properties:
  \begin{itemize}
  \item $Q_p$ has $O(k)$ points for each $p\in P$.
  \item For each $k'\le k$, if $\optarea(P,k')= \Psi(P,\ell,k')$, 
    then $\optarea(P,k')=\Phi(Q_p,p,k')$ for some $p\in P$.
  \end{itemize}
\end{lemma}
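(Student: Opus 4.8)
The plan is to reduce, for each target value $k'\le k$ with $\optarea(P,k')=\Psi(P,\ell,k')$, the global problem to a single instance of the $\Phi$-problem, and to observe that all the required sets $Q_p$ can be produced at once by a constant number of invocations of Lemma~\ref{lem:rangek}. First I would pin down the structure of an optimal solution. Starting from any rectangle $R_0$ realizing $\Psi(P,\ell,k')$, replace it by the bounding box of $R_0\cap P$: this has the same intersection with $P$, has area no larger, and (since $R_0$ is globally optimal and a proper sub-rectangle has strictly smaller positive area) it equals $R_0$. Thus $R^*:=R_0$ is the bounding box of $R^*\cap P$, still meets $\ell$, and—$\ell$ being in general position with respect to $P$—its top edge carries a point $t\in P$ with $t_y>\ell_y$ and its bottom edge a point $b\in P$ with $b_y<\ell_y$. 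I would then prove $|R^*\cap P|=k'$ exactly: if $R^*\cap P$ had more than $k'$ points, deleting the (unique) point of $R^*\cap P$ of smallest $x$-coordinate and taking the bounding box of the rest produces a strictly smaller rectangle still containing at least $k'$ points, contradicting optimality. Hence it suffices to construct sets $Q_p\subseteq P$ with $|Q_p|=O(k)$, computable in $O(kn+n\log n)$ time, such that $R^*\cap P\subseteq Q_t$ or $R^*\cap P\subseteq Q_b$: the inclusion $Q_p\subseteq P$ already gives $\Phi(Q_p,p,k')\ge\optarea(P,k')$, and $R^*\cap P\subseteq Q_p$ for $p\in\{t,b\}$ (which lies on an edge of $R^*$) witnesses the matching upper bound $\Phi(Q_p,p,k')\le\area(R^*)$.

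For the construction I would split $P$ at $\ell$ into $P^+=\{p:p_y>\ell_y\}$ and $P^-=\{p:p_y<\ell_y\}$, and for each $p\in P^+$ let $Q_p$ consist of $p$ together with the outputs of a constant number of applications of Lemma~\ref{lem:rangek} and its axis reflections, run on the point sets $P^+$ and $P^-$: the $\le k$ points of $P^+$ in the south-west and south-east quadrants of $p$ whose $x$-coordinate is closest to $p_x$; the $\le k$ points of $P^-$ in those same quadrants whose $x$-coordinate is closest to $p_x$; and the $\le k$ points of $P^-$ in those quadrants whose $y$-coordinate is closest to $\ell_y$. For $p\in P^-$ take the mirror image under reflection in $\ell$. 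Each $Q_p$ has $O(k)$ points, and Lemma~\ref{lem:rangek} delivers all of them together in $O(kn+n\log n)$ time. The clean half of correctness is this: $R^*\cap P^+$ lies in the open horizontal slab between $\ell$ and the line $y=t_y$, has at most $k'\le k$ points, and within that slab is an $x$-contiguous block straddling $t_x$; therefore it is contained in $\{t\}$ together with the two "$P^+$, nearest in $x$" lists of $Q_t$. Symmetrically $R^*\cap P^-\subseteq Q_b$.

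What remains is to show that one of $Q_t,Q_b$ contains \emph{all} of $R^*\cap P$, i.e. that the wrong-side block—$R^*\cap P^-$ when we work from $t$, or $R^*\cap P^+$ when we work from $b$—is also captured, and this is the step I expect to be the main obstacle: that block is cut out of a quadrant of $p$ by a genuine axis-parallel box rather than by another quadrant, so it is not a monotone prefix of a single Lemma~\ref{lem:rangek} query. The way I would push this through is to exploit again that $R^*$ is the bounding box of its own $\le k$ points, so that its $x$-extent and $y$-extent are attained at points of $R^*\cap P$; a case distinction on whether those extreme points lie above or below $\ell$ should, in each case, identify a single quadrant of $t$ or of $b$ of which the wrong-side block is a monotone prefix, so that it is picked up either by a "nearest in $x$" list or by a "nearest in $y$ to $\ell$" list (adding a further constant number of Lemma~\ref{lem:rangek} calls to $Q_p$ if a residual case demands it). Verifying that this case analysis is exhaustive and that $O(k)$ points per $p$ genuinely suffice is where the real work lies; the rest of the lemma (the size bound and the $O(kn+n\log n)$ running time) then follows directly from Lemma~\ref{lem:rangek}.
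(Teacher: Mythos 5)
Your setup is sound and the ``clean half'' is argued correctly: the same\-/side block $R^*\cap P^+$ is an $x$-contiguous prefix of the points of $P^+$ in the slab between $\ell$ and the line through $t$, so it is captured by the nearest-in-$x$ lists, exactly as in the paper. But the cross\-/side block, which you yourself flag as ``the main obstacle,'' is a genuine gap, and the lists you propose for it do not close it. Concretely, take $a\in R^*\cap P^-$ in the south-east quadrant of $t$. Your nearest-in-$x$ list for $P^-$ in that (untruncated) quadrant can be crowded out by $k$ points with $x$-coordinates between $t_x$ and $a_x$ lying far below $\ell$, outside $R^*$; your nearest-in-$y$-to-$\ell$ list can be crowded out by $k$ points just below $\ell$ lying far to the right, outside $R^*$. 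Neither list then contains $a$, and your fallback hope that $Q_b$ catches what $Q_t$ misses fails symmetrically for $R^*\cap P^+$ (crowd out with points far above the top edge, and with points just above $\ell$ far to the side). No case analysis on where the extreme points of $R^*\cap P$ lie rescues these particular lists; one genuinely needs a differently shaped query region for the wrong side.

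The paper's missing ingredient is a reflection-plus-exchange argument. Let $\overline{q}$ denote the reflection of $q$ across $\ell$, and for an anchor $q$ define the $3$-sided region $\slab(q)\cap\{x\ge q_x\}$, where $\slab(q)$ is the slab between $\ell$ and the horizontal line through $q$; the set $Q_p$ consists of the $k$ nearest-in-$x$ points of $P$ in each of the four regions anchored at $p$ and at $\overline{p}$ (left and right). Two choices make this work. First, the anchor is taken to be whichever of $t^*,b^*$ is \emph{farther} from $\ell$, so that $R^*\cap P\subseteq\slab(t^*)\cup\slab(\overline{t^*})$; the reflected slab is exactly your quadrant truncated at depth $\dist(t^*,\ell)$ below $\ell$, which is what kills the ``far below $\ell$'' interlopers. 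Second, containment is proved by contradiction via areas rather than directly: if some $a\in R^*\cap P$ is missed, then the relevant $3$-sided region already holds $k$ points all with $|t^*_x-\tilde q_x|<|t^*_x-a_x|$, so their bounding box has width less than that of $R^*$ and height at most $\dist(t^*,\ell)\le$ the height of $R^*$, hence strictly smaller area while covering $k\ge k'$ points --- contradicting optimality of $R^*$. This exchange step is what lets $O(k)$ points per anchor suffice without any further case analysis or extra lists; the $O(kn+n\log n)$ construction then follows from a constant number of (reflected) applications of Lemma~\ref{lem:rangek}, as you anticipated. Incidentally, your preliminary claim that $|R^*\cap P|=k'$ exactly is not needed for any of this.
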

\begin{proof}
  For each $p\in P$ let $\overline{p}$ be the point symmetric to $p$
  with respect to the line $\ell$. For each point $q$ of the plane,
  $q\notin \ell$, we define the following objects. See
  Figure~\ref{fig:fewsmallcases1}.
  \begin{itemize}
  \item Let $\slab(q)$ be the horizontal slab 
    defined by $\ell$ and the line parallel to $\ell$ through $q$.
  \item Let $R^\shortrightarrow_q$ be the $3$-sided rectangle 
    $\slab(q)\cap \{ (x,y)\in \RR^2\mid x\ge q_x\}$ and let
    $P^\shortrightarrow_q$ be the $k$ points of $P$ with smallest
    $x$-coordinate inside $R^\shortrightarrow_q$. 
  \item Let $R^\shortleftarrow_q$ be the $3$-sided rectangle 
    $\slab(q)\cap \{ (x,y)\in \RR^2\mid x\le q_x\}$ and let
    $P^\shortleftarrow_q$ be the $k$ points of $P$ with largest
    $x$-coordinate inside $R^\shortleftarrow_q$. 
  \end{itemize}
  
  \begin{figure}
    \centerline{\includegraphics[page=1]{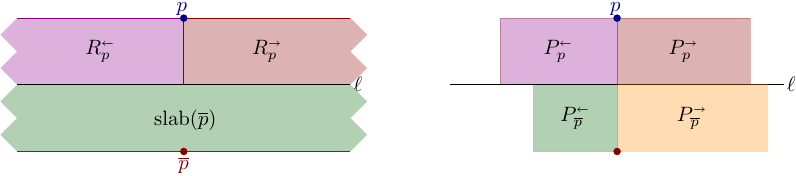}}
    \caption{Notation used in Lemma~\ref{lem:fewsmallcases}. On the
      right side we show the portions of the $3$-sided rectangles that
      contain $Q_p$.}
    \label{fig:fewsmallcases1}	
  \end{figure}

  For each $p\in P$, we define $Q_p$ as the union of
  $P^\shortrightarrow_p$, $P^\shortleftarrow_p$,
  $P^\shortrightarrow_{\overline p}$, and
  $P^\shortleftarrow_{\overline p}$. It is clear that each set $Q_p$
  has at most $4k$ points, so the first property of the lemma holds.
	
  To show the second property, consider any fixed $k'\le k$ and assume
  that $\optarea(P,k')= \Psi(P,\ell,k')$. Then there exists an
  optimal box~$R^*$ with $\area(R^*)=\optarea(P,k')$ such that~$R^*$
  intersects~$\ell$. Let $t^*$ and $b^*$ be points of $P$ on
  $\top(R^*)$ and $\bot(R^*)$, respectively. Assume without loss of
  generality that the distance from $t^*$ to $\ell$ is at least the
  distance from $b^*$ to $\ell$. This means that $R^*\cap P$ is
  contained in $\slab(t^*)\cup \slab(\overline{t^*})$.

  We next show that $R^*\cap P$ is contained in $Q_{t^*}$. Assume,
  for the sake of reaching a contradiction, that $R^*\cap P$ contains
  a point $a$ that is not in $Q_{t^*}$. See
  Figure~\ref{fig:fewsmallcases2}. Therefore, $a$ is contained in one
  of the $3$-sided rectangles used to define $Q_{t^*}$, namely
  $R^\shortrightarrow_{t^*}$, $R^\shortleftarrow_{t^*}$,
  $R^\shortrightarrow_{\overline{t^*}}$,
  $R^\shortleftarrow_{\overline{t^*}}$. Let $\tilde R\in
  \{R^\shortrightarrow_{t^*}, R^\shortleftarrow_{t^*},
  R^\shortrightarrow_{\overline{t^*}},
  R^\shortleftarrow_{\overline{t^*}}\}$ be the $3$-sided rectangle
  that contains $a$, let $\tilde P\in \{P^\shortrightarrow_{t^*},
  P^\shortleftarrow_{t^*}, P^\shortrightarrow_{\overline{t^*}},
  P^\shortleftarrow_{\overline{t^*}}\}$ be the set contained in
  $\tilde R$ and let $\tilde q$ be the point of $\tilde P$ furthest
  from the vertical line through $t^*$. Note that $\tilde P$ contains
  $k$ points, as otherwise there cannot be any point of $P$ in $\tilde
  R\setminus \tilde P$ and $a$ cannot exist. By the way we selected
  the points of $\tilde P$ inside $\tilde R$ we have
  \[
  |t^*_x-\tilde q_x| ~<~ |t^*_x-a_x|. 
  \]
  Here we are using general position to rule out the possibility of
  equality. Note that the bounding box $\bb(\tilde P)$ of $\tilde P$
  contains $k\ge k'$ points and has area at most
  \[
  |t^*_x-\tilde q_x|\cdot \dist(t^*,\ell),
  \]
  where $\dist(t^*,\ell)$ denotes the vertical distance from $t^*$ to
  the line $\ell$. On the other hand, since $R^*$ intersects $\ell$
  and has $a$ and $t^*$ in its boundary, we have
  \[
  \area(R^*) ~\ge~ |t^*_x-a_x|\cdot \dist(t^*,\ell) 
  ~>~ |t^*_x-\tilde q_x|\cdot \dist(t^*,\ell) ~\ge~ \area(\bb(\tilde P)).
  \]
  This contradicts the optimality of $R^*$ for covering $k'$ points.
  This finishes the proof that $R^*\cap P$ is contained in $Q_{t^*}$,
  and therefore the second property holds.
  
  \begin{figure}
    \centerline{\includegraphics[page=2]{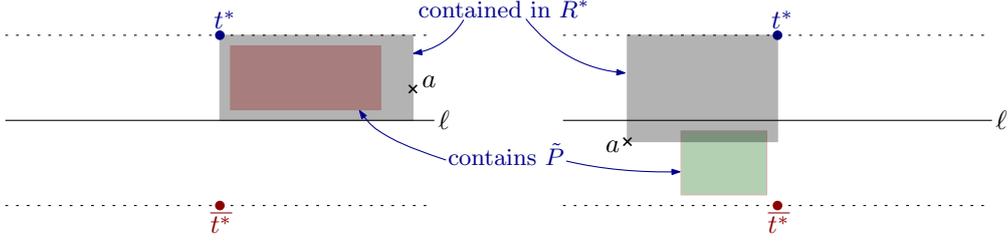}}
    \caption{Part of the proof of Lemma~\ref{lem:fewsmallcases} where we show
      that a point $a$ outside $R^*\cap Q_{t^*}$ cannot exist.
      On the left we have the case when $\tilde P=P^\shortrightarrow_{t^*}$
      and on the right the case $\tilde P=P^\shortleftarrow_{\overline{t^*}}$.}
    \label{fig:fewsmallcases2}	
  \end{figure}
  
  It remains to show that the construction of the sets $Q_p$, for all
  $p\in P$, can be done in $O(kn+n\log n)$ time. For this we use
  Lemma~\ref{lem:rangek} a few times, as follows. Let $P^+$ and $P^-$
  be the points above and below $\ell$, respectively. We also define
  $\overline{P^+}=\{ \overline{p} \mid p\in P^+\}$ and
  $\overline{P^-}=\{ \overline{p} \mid p\in P^-\}$. The point sets
  $P^\shortrightarrow_q$ for all $q\in P^- \cup \overline{P^+}$, are
  obtained using Lemma~\ref{lem:rangek} with $A=P^-$ and $B=P^- \cup
  \overline{P^+}$. The sets $P^\shortrightarrow_q$ for all $q\in P^+
  \cup \overline{P^-}$, the sets $P^\shortleftarrow_q$ for all $q\in
  P^- \cup \overline{P^+}$, and the sets $P^\shortleftarrow_q$ for all
  $q\in P^+ \cup \overline{P^-}$, are computed in a similar way, using
  symmetric versions of Lemma~\ref{lem:rangek}.
\end{proof}

\begin{lemma}
  \label{lem:mergestep}
  Let $P$ be a set of $n$ points, let $\ell$ be a horizontal line, and
  let $k$ be a positive integer. After $O(nk+ n \log n)$
  preprocessing time, we can compute, for any given $k'\le k$, a value
  $V(P,\ell,k')$ with the following properties in $O(nk^2)$ time:
  \begin{itemize}
  \item $\optarea(P,k') \le V(P,\ell,k')$; 
  \item if $\optarea(P,k')= \Psi(P,\ell,k')$, then
    $V(P,\ell,k')=\optarea(P,k')$. 
  \end{itemize}
\end{lemma}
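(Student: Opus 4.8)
The plan is to compose Lemma~\ref{lem:fewsmallcases} with Lemma~\ref{lem:smallcase}, so that all the real work has already been done. In the preprocessing phase I would invoke Lemma~\ref{lem:fewsmallcases} on the instance $(P,\ell,k)$ to obtain, in $O(kn+n\log n)$ time, the family of sets $\{Q_p\}_{p\in P}$ with $|Q_p|=O(k)$ for every $p$. The crucial point is that this family is built once, using only the upper bound $k$, and by the statement of Lemma~\ref{lem:fewsmallcases} it simultaneously serves every threshold $k'\le k$; these sets (with an $x$-sorted order attached to each) are exactly what gets stored.

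For a query value $k'\le k$ I would then, for each $p\in P$, run the algorithm of Lemma~\ref{lem:smallcase} on $(Q_p,p,k')$ to compute $\Phi(Q_p,p,k')$ in $O(|Q_p|^2)=O(k^2)$ time, and set
\[
	V(P,\ell,k') ~:=~ \min_{p\in P}\ \Phi(Q_p,p,k').
\]
Summed over $p$ this costs $O(nk^2)$, matching the claimed bound. If some $Q_p$ has fewer than $k'$ points, or admits no feasible rectangle, the corresponding $\Phi$ is $+\infty$ and simply does not affect the minimum; if \emph{every} term is $+\infty$ we set $V(P,\ell,k')=+\infty$, which is still consistent with both asserted properties.

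For the first property, each finite value $\Phi(Q_p,p,k')$ is the area of a rectangle containing at least $k'$ points of $Q_p\subseteq P$, hence at least $k'$ points of $P$, so $\Phi(Q_p,p,k')\ge\optarea(P,k')$; taking the minimum over $p$ preserves $\optarea(P,k')\le V(P,\ell,k')$. For the second property, assume $\optarea(P,k')=\Psi(P,\ell,k')$. Then Lemma~\ref{lem:fewsmallcases} yields a point $p\in P$ with $\Phi(Q_p,p,k')=\optarea(P,k')$, whence $V(P,\ell,k')\le\Phi(Q_p,p,k')=\optarea(P,k')$; together with the first property this forces $V(P,\ell,k')=\optarea(P,k')$.

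I do not expect a substantial obstacle here, since the geometric content lives in the two preceding lemmas; the only points requiring care are bookkeeping ones: checking that the preprocessing depends on $k$ alone (not on the individual query $k'$) so its cost is paid once, confirming that the sets $Q_p$ can be fed to Lemma~\ref{lem:smallcase} without an extra sort (or absorbing an $O(k\log k)$ re-sort into the $O(k^2)$ term), and disposing of the degenerate $+\infty$ cases as above. If an explicit optimal rectangle is wanted in addition to the value $V(P,\ell,k')$, it suffices to have each call to Lemma~\ref{lem:smallcase} return its witnessing rectangle and retain the one attaining the minimum.
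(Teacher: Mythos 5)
Your proposal matches the paper's proof essentially verbatim: preprocess by building the sets $Q_p$ via Lemma~\ref{lem:fewsmallcases}, then on a query $k'$ compute $\Phi(Q_p,p,k')$ for each $p$ via Lemma~\ref{lem:smallcase} and take the minimum, with the two asserted properties following from the feasibility of each $\Phi$ and from the second bullet of Lemma~\ref{lem:fewsmallcases}. The extra remarks about $+\infty$ sentinels and caching a sorted order are harmless bookkeeping that the paper leaves implicit.
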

\begin{proof}
  We compute the sets $Q_p$, indexed by $p\in P$, using
  Lemma~\ref{lem:fewsmallcases}. This finishes the preprocessing and
  takes time $O(kn+ n \log n)$ time.
  
  Now suppose that we are given a value $k'\le k$. For each $p\in P$,
  we use Lemma~\ref{lem:smallcase} to find the value $\Phi(Q_p,p,k')$
  in $O(|Q_p|^2)=O(k^2)$ time. We return the value $V(P,\ell,k')=\min
  \{ \Phi(Q_p,p,k') \mid p\in P \}$. The computation takes $O(nk^2)$
  time.
	
  Since for each $p\in P$ the value $\Phi(Q_p,p,k')$ is the area of a
  box containing $k'$ points of~$P$, we have $V(P,\ell,k')\ge
  \optarea(P,k')$. If $\optarea(P,k')= \Psi(P,\ell,k')$, then
  Lemma~\ref{lem:fewsmallcases} guarantees that $\optarea(P,k')=
  \Phi(Q_{p_0},p_0,k')$ for some $p_0\in P$, and therefore
  \[ 
  \optarea(P,k') \; = \; \Phi(Q_{p_0},p_0,k')
  \; \ge \; \min \{ \Phi(Q_p,p,k') \mid p\in P \} 
  \; = \; V(P,\ell,k').
  \]
  We conclude that $V(P,\ell,k')=\optarea(P,k')$.
\end{proof}

\begin{theorem}
  \label{thm:main1}
  Given a set of $n$ points $P$ and a value $k$, we can preprocess $P$
  in $O(nk\log n+ n \log^2 n)$ time such that, for any given $k'\le
  k$, we can find in $O(nk^2\log n)$ time a minimum-area box that
  contains at least $k'$~points of~$P$.
\end{theorem}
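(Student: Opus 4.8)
The plan is to combine Lemma~\ref{lem:mergestep} with a divide-and-conquer recursion on horizontal slabs, so that every candidate optimal rectangle becomes the intersecting-a-line case $\optarea(P,k')=\Psi(P,\ell,k')$ at exactly one node of the recursion tree. First I would set up the recursion: sort $P$ by $y$-coordinate once, in $O(n\log n)$ time, and build a balanced binary recursion tree on this sorted order. At the root the set is all of $P$; a node holding a point set $P_v$ of size $n_v$ chooses the horizontal line $\ell_v$ that splits $P_v$ into a top half and a bottom half of (almost) equal size, recurses on each half, and runs the procedure of Lemma~\ref{lem:mergestep} with input $(P_v,\ell_v,k)$ to obtain, after preprocessing, the value $V(P_v,\ell_v,k')$ for the queried $k'$. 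The answer returned for the query is the minimum of all these $V(P_v,\ell_v,k')$ over all nodes $v$ of the tree.

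The correctness argument is the heart of the proof. Fix $k'\le k$ and let $R^*$ be an optimal rectangle for $\optarea(P,k')$, with its covered point set $R^*\cap P$. Walk down the recursion tree from the root: as long as all of $R^*\cap P$ lies strictly on one side of the splitting line $\ell_v$, descend into that side. Since leaves correspond to single points and $|R^*\cap P|\ge k'\ge 2$, this descent must stop at some node $v$ whose splitting line $\ell_v$ separates two points of $R^*\cap P$ — equivalently, $R^*$ intersects $\ell_v$, while $R^*\cap P\subseteq P_v$. Hence $\optarea(P_v,k')=\Psi(P_v,\ell_v,k')$ and $\optarea(P_v,k')=\optarea(P,k')$, since $R^*$ is still optimal within $P_v$ and no rectangle covering $k'$ points of $P_v\subseteq P$ can beat it. By the second bullet of Lemma~\ref{lem:mergestep}, $V(P_v,\ell_v,k')=\optarea(P,k')$; by the first bullet every $V(\cdot)$ is an upper bound for $\optarea(P,k')$ (it is the area of a genuine rectangle covering $k'$ points of $P_v\subseteq P$), so the minimum over all nodes equals $\optarea(P,k')$ exactly. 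To also report a witnessing rectangle, I would have Lemma~\ref{lem:smallcase} (invoked inside Lemma~\ref{lem:mergestep}) return not just the area but the rectangle attaining it, which costs no extra asymptotic time.

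For the running time I would sum over the recursion tree by levels. A node of size $n_v$ costs $O(n_v k+n_v\log n_v)=O(n_v k+n_v\log n)$ for the preprocessing of Lemma~\ref{lem:mergestep} and $O(n_v k^2)$ for each query. Since the node sizes at a fixed level sum to at most $n$ and there are $O(\log n)$ levels, the total preprocessing time is $O(nk\log n+n\log^2 n)$ and the total query time is $O(nk^2\log n)$, matching the claim. One technical point that needs care is that Lemma~\ref{lem:mergestep} is stated for a parameter $k$ fixed at preprocessing time with queries allowed for any $k'\le k$; this is exactly why I preprocess every node once with the global bound $k$ and then answer the query with the specific $k'$, reusing the same $Q_p$ sets. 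The main obstacle is getting the descent argument airtight: I must ensure that "descend into the side containing $R^*\cap P$" is well-defined (general position guarantees no covered point lies on a splitting line, since splitting lines can be chosen strictly between consecutive $y$-coordinates) and that the stopping node genuinely gives $R^*$ an intersecting line rather than merely $R^*\cap P$ straddling the line in a degenerate way — again handled by placing $\ell_v$ strictly between two points so that $R^*$ must cross it.
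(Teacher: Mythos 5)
Your proposal is correct and follows essentially the same route as the paper: a divide-and-conquer on median horizontal splitting lines, applying Lemma~\ref{lem:mergestep} at every node of the recursion tree (preprocessing once with the global bound $k$, querying with $k'$), and summing costs level by level. Your explicit ``descend until the splitting line separates two covered points'' argument is just an unrolled version of the paper's recursive identity $\optarea(P,k') = \min \{ \Psi(P,\ell,k'), \optarea(P^+,k'), \optarea(P^-,k') \}$, so the two correctness arguments coincide.
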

\begin{proof}
  Consider a horizontal line $\ell$ such that at most half of the
  points of $P$ are above $\ell$ and at most half of the points are
  below~$\ell$. Let $P^+$ and $P^-$ be the subset of $P$ above and
  below~$\ell$, respectively. For any number of points $k'$, where
  $1\le k' \le n$ we have
  \[
  \optarea(P,k') ~=~ \min \; \big\{ \ \Psi(P,\ell,k'),
  \ \optarea(P^+,k'), \ \optarea(P^-,k') \ \big\}. 
  \]
  Indeed, an optimal solution containing $k'$ points is either
  intersected by $\ell$ or it contains points from only one of the
  sets $P^+$ and $P^-$. This is the basis for an algorithm based on
  divide and conquer.
  
  In the preprocessing, we use Lemma~\ref{lem:mergestep} for $P$,
  $\ell$ and $k$, which takes $O(nk+n\log n)$ time, and then
  recursively preprocess $P^+$ and $P^-$. Since the recursion has
  $\log n$ levels and since any two point sets at the same level of
  the recursion are disjoint, we spend $O(nk\log n+ n \log^2 n)$ time
  in preprocessing.

  When we are given a value $k'$, we compute $\optarea(P,k')$ using
  the same recursive pattern. At the first level, with the point set
  $P$ and the line $\ell$, we spend $O(nk^2)$ time to compute
  $V(P,\ell,k')$, using Lemma~\ref{lem:mergestep}. Then we go on to
  compute $\optarea(P^+,k')$ and $\optarea(P^-,k')$ recursively, using
  our already-done preprocessing. Finally, we return the minimum of
  $V(P,\ell,k')$, $\optarea(P^+,k')$ and $\optarea(P^-,k')$. By
  Lemma~\ref{lem:mergestep}, we always have $\optarea(P,k')\le
  V(P,\ell,k')$ and, when $\optarea(P,k')= \Psi(P,\ell,k')$, we also
  have $\optarea(P,k')=V(P,\ell,k')$. It follows that
  \[
  \optarea(P,k') ~=~ \min \; \big\{ \; V(P,\ell,k'), \;
  \optarea(P^+,k'),\; \optarea(P^-,k') \; \big\} 
  \]
  and thus we are returning the correct value of $\optarea(P,k')$.
  Since we have $\log n$ levels in the recursion, we spend $O(nk^2 \log n)$ time in total.
\end{proof}

\begin{corollary}
  \label{coro:main1}
  Given a set of $n$ points $P$ and a value $k$ we can find in $O(n
  k^2\log n+ n \log^2 n)$ time a minimum-area box that contains at
  least $k$~points of~$P$.
\end{corollary}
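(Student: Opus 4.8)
The plan is to obtain the corollary as a direct specialization of Theorem~\ref{thm:main1}. First I would run the preprocessing guaranteed by that theorem on the input set $P$ with the given parameter~$k$; this costs $O(nk\log n + n\log^2 n)$ time. Then I would issue a single query with $k'=k$, which by Theorem~\ref{thm:main1} returns a minimum-area axis-parallel rectangle containing at least $k$ points of $P$ in $O(nk^2\log n)$ additional time. No binary search over $k'$ is needed here: the theorem was stated in the refined preprocess-then-query form precisely so that part~(b) of the paper can reuse it for several values of $k'$, but for the corollary one query suffices.

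Adding the two costs gives a total running time of $O(nk\log n + n\log^2 n + nk^2\log n)$. Since $k\ge 1$, we have $nk\log n \le nk^2\log n$, so the first term is absorbed and the bound simplifies to $O(nk^2\log n + n\log^2 n)$, exactly as claimed. (One may also assume $2\le k\le n$, since otherwise the problem is trivially infeasible or degenerate.)

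The only point that genuinely needs a sentence of justification is that we must output the rectangle itself, not merely its area. This is immediate: each of the underlying subroutines can carry along a witnessing rectangle at no asymptotic overhead — the two-pointer scan in Lemma~\ref{lem:smallcase} can remember which window of $k'$ consecutive $x$-sorted points achieved the minimum, Lemma~\ref{lem:mergestep} can remember which $p\in P$ attained $\min\{\Phi(Q_p,p,k')\mid p\in P\}$, and the recursion in Theorem~\ref{thm:main1} can remember which of the three candidate values realized the overall minimum. I do not expect any real obstacle; the corollary is essentially a restatement of Theorem~\ref{thm:main1} with $k'=k$ and the time bounds collapsed.
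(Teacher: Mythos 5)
Your proposal is correct and matches the paper's proof: Corollary~\ref{coro:main1} is obtained by invoking Theorem~\ref{thm:main1} with $k'=k$ and observing that the preprocessing cost $O(nk\log n + n\log^2 n)$ is dominated by the query cost plus $O(n\log^2 n)$. The paper only adds the minor implementation remark that one can interleave the computation of $\Phi(Q_p,p,k)$ with the generation of the sets $Q_p$ rather than doing a separate preprocessing pass, which does not change the bound.
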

\begin{proof}
  We apply Theorem~\ref{thm:main1} with $k'=k$. In this scenario we
  can get rid of the preprocessing step and, at each level of the
  recursion, compute the values $\Phi(Q_p,p,k)$ immediately after
  generating the sets $Q_p$.
\end{proof}

\section{Maximizing the number of points for a given area}
\label{sec:points}

We now turn to the problem of finding the maximum number of points
that can be covered by a box of a area~$\alpha > 0$. As mentioned in
the introduction, let $\optpoints(P,\alpha)$ be this number of points.
We first compute a constant-factor approximation to
$\optpoints(P,\alpha)$. Then we explain how to obtain a
$(1+\eps)$-approximation using an algorithm whose running time depends
on the value $\optpoints(P,\alpha)$. Finally, we use random sampling
to get a $(1+\eps)$-approximation to $\optpoints(P,\alpha)$ in
near-linear time for a fixed~$\eps>0$.

\subsection{A 4-approximation algorithm}
\label{ssec:4apx}

For a horizontal line~$\ell$ and a point~$p\notin \ell$, let
$R_\alpha^\shortrightarrow(p,\ell)$ be the box that has area $\alpha$,
has $p$~as a corner, has an edge contained in $\ell$, and contains
points with $x$-coordinates larger than~$p_x$. Let
$R_\alpha^\shortleftarrow(p,\ell)$ be the box defined in the same way,
but with points with $x$-coordinates smaller than~$p_x$. See
Figure~\ref{fig:4approx}, left. Let $\R_\alpha(\ell)$ be the set of
boxes $\bigcup_{p\in P}\{ R_\alpha^\shortrightarrow(p,\ell),
R_\alpha^\shortleftarrow(p,\ell)\}$. Let $\optpoints(P,\ell,\alpha)$
be the maximum number of points of $P$ covered by a box of
area~$\alpha$ that intersects the line~$\ell$.

\begin{figure}
  \centerline{\includegraphics{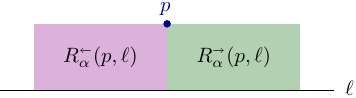}}
  \caption{The boxes $R_\alpha^\shortleftarrow(p,\ell)$ and
    $R_\alpha^\shortrightarrow(p,\ell)$, the boxes have area $\alpha$.}
  \label{fig:4approx}	
\end{figure}
    
\begin{lemma}
  \label{lem:feasible}
  There is some $R\in \R_\alpha(\ell)$ such that $|P\cap R|\ge
  \optpoints(P,\ell,\alpha)/4$. 
\end{lemma}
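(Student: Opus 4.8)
The plan is to take an optimal rectangle $R^*$ with $\area(R^*)\le \alpha$ that intersects $\ell$ and covers $\optpoints(P,\ell,\alpha)$ points, and to show that it can be ``covered'' by four rectangles from the candidate family $\R_\alpha(\ell)$, so that one of them contains at least a quarter of the points of $P\cap R^*$ by pigeonhole. First I would note that since $R^*$ meets $\ell$, the line $\ell$ splits $R^*$ into a top part $R^* \cap \{y \ge \ell\}$ and a bottom part $R^* \cap \{y \le \ell\}$. Let $t$ be the top edge of $R^*$ and $b$ its bottom edge; let $p_{\mathrm{tl}}, p_{\mathrm{tr}}$ be the top-left and top-right corners and $p_{\mathrm{bl}}, p_{\mathrm{br}}$ the bottom-left and bottom-right corners. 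I would decompose $R^*$ into four axis-parallel sub-rectangles meeting at the single point where the left edge (resp.\ right edge) of $R^*$ crosses $\ell$: the top-left piece, the top-right piece, the bottom-left piece, and the bottom-right piece. See Figure~\ref{fig:4approx}, right.

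The key observation is that each of these four pieces is contained in some rectangle of $\R_\alpha(\ell)$. Consider, say, the top-right piece $Q$: it has the corner $p_{\mathrm{tr}}$ as its top-right corner, its bottom edge lies on $\ell$, and all its points have $x$-coordinate at least that of $p_{\mathrm{tr}}$'s opposite side — more precisely, $Q$ is contained in the rectangle obtained by fixing $p_{\mathrm{tr}}$ as a corner, putting the opposite edge on $\ell$, and extending leftward; since $\area(Q) \le \area(R^*) \le \alpha$, this rectangle is contained in $R_\alpha^\shortleftarrow(p_{\mathrm{tr}},\ell)$ (we may need to argue we can take $p_{\mathrm{tr}} \in P$; in fact the vertex of $R^*$ need not be a point of $P$, so instead I would slide $R^*$ until a point of $P$ lies on its relevant edge — this is where a little care is needed, see below). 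Similarly the top-left piece sits inside $R_\alpha^\shortrightarrow(\cdot,\ell)$ for an appropriate point, and the two bottom pieces sit inside rectangles anchored at points below $\ell$, each with an edge on $\ell$ and each of area at most $\alpha$, hence in $\R_\alpha(\ell)$. Then $P \cap R^*$ is the disjoint union (up to the measure-zero boundary, handled by general position) of the four intersections $P \cap Q_i$, so some $Q_i$ — and hence some $R\in\R_\alpha(\ell)$ — satisfies $|P \cap R| \ge |P \cap R^*|/4 = \optpoints(P,\ell,\alpha)/4$.

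The main obstacle is the anchoring issue: the rectangles in $\R_\alpha(\ell)$ are anchored at \emph{points of $P$}, whereas the corners of $R^*$ are arbitrary. I would resolve this by a standard shrinking/sliding argument: without loss of generality $R^*$ can be assumed to have a point of $P$ on each of its four edges (otherwise shrink it, which only helps), and in particular a point of $P$ on its top edge and one on its bottom edge. Translating $R^*$ horizontally and vertically so that these witness points become the relevant corners of the sub-pieces — or, more cleanly, enlarging each candidate rectangle slightly and arguing containment — lets each of the four pieces be captured by a member of $\R_\alpha(\ell)$ anchored at the witness point on $\top(R^*)$ or $\bot(R^*)$. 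One has to check that enlarging to area exactly $\alpha$ is always possible and that the $x$-extent then indeed covers the piece; here the general-position assumption and the fact that each piece has area $\le\alpha$ do the work. The rest is the one-line pigeonhole argument.
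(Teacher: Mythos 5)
Your proposal follows the same route as the paper: take an optimal $R^*$ meeting $\ell$, cut it into four pieces along $\ell$ and along vertical segments, anchor each piece at a point of $P$ on $\top(R^*)$ or $\bot(R^*)$, show each piece sits inside a member of $\R_\alpha(\ell)$ via the area bound, and apply pigeonhole. The anchoring difficulty you identify is real, and your eventual resolution (use the witness points on the top and bottom edges as the corners of the sub-pieces, i.e.\ drop the vertical cuts through those witnesses rather than through the corners of $R^*$) is exactly what the paper does; your containment argument (same height as the candidate, width at most $\alpha/\dist(t,\ell)$ because the piece has its opposite edge on $\ell$ and area at most $\alpha$) is also correct. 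Note, though, that your alternative fix of ``enlarging each candidate rectangle slightly'' is not available: the members of $\R_\alpha(\ell)$ have area exactly $\alpha$ by definition, so the only degree of freedom is in how you shrink and cut $R^*$.

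The one genuine soft spot is the case where all points of $P\cap R^*$ lie strictly on one side of $\ell$. Your blanket step ``shrink $R^*$ until every edge contains a point of $P$'' then pulls the rectangle entirely off $\ell$, after which the decomposition by $\ell$ degenerates and, more importantly, the pieces no longer have an edge on $\ell$ --- so the inequality $\mathrm{width}(Q)\le\alpha/\dist(t,\ell)$ no longer follows from $\area(Q)\le\alpha$ alone (the piece is shorter than the candidate, so it could in principle be wider). The paper handles this by treating it as a separate case: shrink the far edge only until it lies \emph{on} $\ell$, keep the near edge at a witness point $t\in P$, and split into just two pieces through $t$, which even yields a factor $2$ there. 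This is an easy repair, but as written your proof does not cover that case.
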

\begin{proof}
  Let $R^*$ be a box of area $\alpha$ covering
  $\optpoints(P,\ell,\alpha)$ points and intersecting~$\ell$. Let
  $\ell^+$ and $\ell^-$ denote the closed half-planes above and below
  $\ell$, respectively. Define $R^+ := \R^*\cap\ell^+$ and $R^- :=
  \R^*\cap\ell^-$, and assume without loss of generality that
  $|R^+\cap P| \geq |R^* \cap P|/2$. Let $p$ be the point in $P\cap
  R^*$ with maximum $y$-coordinate. Then $R^+ \subset
  R_\alpha^\shortrightarrow(p,\ell)\cup
  R_\alpha^\shortleftarrow(p,\ell)$. Hence, at least one of
  $R_\alpha^\shortrightarrow(p,\ell)$ or
  $R_\alpha^\shortleftarrow(p,\ell)$ must contain $|R^+ \cap P|/2 \geq
  |R^* \cap P|/4$ points.
\end{proof}

\begin{theorem}
  \label{thm:4approx}
  Given a set of $n$ points $P$ and a value $\alpha> 0$, we can
  compute in $O(n\log^2 n)$ time a value $\kappa(P,\alpha)$ such that
  $\optpoints(P,\alpha)/4 \le \kappa(P,\alpha) \le
  \optpoints(P,\alpha)$.
\end{theorem}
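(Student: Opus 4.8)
The plan is to combine a divide-and-conquer recursion on a horizontal median line with the local approximation guarantee of Lemma~\ref{lem:feasible}. As in Theorem~\ref{thm:main1}, pick a horizontal line $\ell$ that splits $P$ into $P^+$ (above) and $P^-$ (below), each of size at most $n/2$, and observe that an optimal rectangle of area $\alpha$ either crosses $\ell$ or lies entirely in the closed halfplane containing $P^+$ or the one containing $P^-$; hence
\[
	\optpoints(P,\alpha) ~=~ \max \{ \optpoints(P,\ell,\alpha),\, \optpoints(P^+,\alpha),\, \optpoints(P^-,\alpha) \}.
\]
At each node of the recursion I would compute $\max \{ |P\cap R| \mid R\in \R_\alpha(\ell)\}$, which by Lemma~\ref{lem:feasible} lies between $\optpoints(P,\ell,\alpha)/4$ and $\optpoints(P,\ell,\alpha)$, and then return the maximum of this local value and the two recursively computed values. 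A straightforward induction on the recursion depth shows the returned value $\kappa(P,\alpha)$ satisfies $\optpoints(P,\alpha)/4 \le \kappa(P,\alpha) \le \optpoints(P,\alpha)$: the upper bound holds because every value we ever take a max over is the true point-count of some area-$\alpha$ rectangle, and the lower bound holds because the branch of the recursion that ``sees'' the globally optimal rectangle (the one whose $\ell$ is crossed by it) reports at least a quarter of $\optpoints(P,\alpha)$.

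The only real work is showing that, at a node with point set $P'$ of size $m$ and line $\ell$, we can evaluate $\max \{ |P'\cap R| \mid R\in \R_\alpha(\ell)\}$ in $O(m\log m)$ time, since then the recursion solves to $O(n\log^2 n)$ by the usual $T(n) = 2T(n/2) + O(n\log n)$. There are $O(m)$ rectangles in $\R_\alpha(\ell)$, one rightward and one leftward rectangle anchored at each point; each has a fixed corner, a side on $\ell$, and area exactly $\alpha$, so its width and height are determined by the anchor point's distance to $\ell$. Counting the points of $P'$ inside one such rectangle is a $3$-sided range counting query (the rectangle is unbounded-free only in that one side is pinned to $\ell$; really it is an orthogonal box, but with one edge on $\ell$). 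I would handle all rightward rectangles together with a sweep: sort the points above $\ell$ and the anchors by $x$; as the sweep line moves right, maintain the multiset of points currently inside the horizontal slab $\slab$ determined by the current anchor's height in a balanced BST keyed by $y$-distance to $\ell$, so that the query ``how many of the currently-present points have $y$-distance at most that of the anchor, i.e.\ how many are caught by $R_\alpha^\shortrightarrow(p,\ell)$'' is an order-statistics query. A cleaner alternative is to build, for the points above $\ell$, a persistent (or static) $2$D range tree / range-counting structure in $O(m\log m)$ time and simply issue the $O(m)$ box queries, each in $O(\log m)$ time; since the problem is offline one can also just sort events and use a Fenwick tree. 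Symmetrically for the leftward rectangles and for the points below $\ell$. Either way the per-node cost is $O(m\log m)$.

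The main obstacle is purely the data-structuring bookkeeping for the box-counting step: one must be careful that the rectangles $R_\alpha^\shortrightarrow(p,\ell)$ for points $p$ above $\ell$ only ever contain points above $\ell$ (which is why splitting into $P^+$ and $P^-$ and treating the four families of rectangles separately is legitimate), and that the ``area exactly $\alpha$'' constraint couples the two side lengths so that each rectangle is genuinely determined by a single point — there is no range of rectangles to optimize over at a fixed anchor, unlike in Section~\ref{sec:optarea}. Once that is set up, both correctness (the two-sided sandwich on $\kappa$) and the $O(n\log^2 n)$ running time follow by the routine induction sketched above. I would also remark, as in Corollary~\ref{coro:main1}, that the structure can be built once and reused, but that is not needed for the stated bound.
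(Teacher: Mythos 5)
Your proposal is correct and follows essentially the same route as the paper: a divide-and-conquer on a horizontal median line, with Lemma~\ref{lem:feasible} providing the $4$-approximation for the crossing case and $O(m\log m)$ orthogonal range counting over the $O(m)$ rectangles of $\R_\alpha(\ell)$ at each node, yielding $O(n\log^2 n)$ overall. The only cosmetic difference is that the paper builds a single global range-counting structure up front and queries it at every level, rather than rebuilding (or sweeping) per node, but both give the same bound.
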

\begin{proof}
  We preprocess $P$ for $4$-sided rectangle counting
  queries~\cite{Willard85}. The preprocessing takes $O(n\log n)$ time
  and for each query rectangle $R$ we can report $|R\cap P|$ in
  $O(\log n)$ time.
	
  Then we proceed with a recursive algorithm. Take a line $\ell$ that
  splits $P$ into two sets $P^+$ and $P^-$ of roughly equal size.
  Note that
  \[
  \optpoints(P,\alpha) ~=~ \max \; \big\{\; \optpoints(P^+,\alpha), \;
  \optpoints(P^-,\alpha), \; \optpoints(P,\ell,\alpha) \; \big\}. 
  \]	
  
  We build the set of boxes $\R_\alpha(\ell)$ in $O(n)$ time. For
  each box $R\in \R_\alpha(\ell)$ we query the data structure to
  obtain~$|R\cap P|$. Thus, we obtain $\kappa(P,\ell,\alpha)=\max \{
  |R\cap P| \mid R\in \R_{\alpha}(\ell) \}$ in $O(n\log n)$ time. By
  Lemma~\ref{lem:feasible}, we have $\optpoints(P,\ell,\alpha)/4 \le
  \kappa(P,\ell,\alpha)\le \optpoints(P,\ell,\alpha)$. Then, we
  return the best between the value $\kappa(P,\ell,\alpha)$ and the
  values $\kappa(P^+,\alpha),\kappa(P^-,\alpha)$ obtained recursively
  for $P^+$ and $P^-$, respectively. Since at each level of the
  recursion the point sets being considered are disjoint, we spend
  $O(n\log n)$ time at each level of the recursion, for a total
  $O(n\log^2 n)$ for the whole algorithm.

  Since the algorithm only considers boxes of area $\alpha$, the
  returned value is obviously at most $\optpoints(P,\alpha)$. On the
  other hand, by induction we have $\kappa(P^+,\alpha)\ge
  \optpoints(P^+,\alpha)/4$ and $\kappa(P^-,\alpha)\ge
  \optpoints(P^-,\alpha)/4$. Together with $\kappa(P,\ell,\alpha)\ge
  \optpoints(P,\ell,\alpha)/4$ we obtain that
  \begin{align*}
    \kappa(P,\alpha)~&=~
    \max \{\kappa(P^+,\alpha), \,\kappa(P^-,\alpha),
    \,\kappa(P,\ell,\alpha)\} \\ 
    &\ge~
    \max \{\optpoints(P^+,\alpha)/4, \,\optpoints(P^-,\alpha)/4,
    \,\optpoints(P,\ell,\alpha)/4\} \\  
    ~&=~ \optpoints(P,\alpha)/4. \qedhere
  \end{align*}
\end{proof}

\subsection{Properties of random sampling}
\label{ssec:rsprops}

For the rest of this section, let $P$ be a set of $n$ points in the
plane, and let $\eps$ be a real value with $0< \eps< 1$. We use
relative approximations~\cite{Har-PeledS11} to show that a random
sample from $P$ can be used to count the points of $P$ inside each
box, assuming that the box has enough points. We use $s$ for the
cardinality of the sample.

\begin{lemma}
  \label{lem:sampling}
  Suppose that $\kappa$ satisfies $\optpoints(P,\alpha)\le \kappa$.
  Let $s=\min\{ n, \frac{c}{\eps^2} \frac{n}{\kappa}\log n\}$, where
  $c$ is an appropriate absolute constant, and let $S$ be a random
  sample of $P$ with $s$ points. Then with probability at least
  $1-1/n$ the following properties hold simultaneously:
  \begin{itemize}
  \item For each box~$R$ of area at most~$\alpha$
    \[
    \left| \frac{|P\cap R|}{n} - \frac{|S\cap R|}{s}\right| ~\le~
    \eps\cdot \frac{\kappa}{n}; 
    \]
  \item $\optpoints(S,\alpha)=O\big((1/\eps^2)\log n\big)$.
  \end{itemize}
\end{lemma}
\begin{proof}
  One can prove this using Chernoff bounds as we did in our first
  preprint~\cite{BergCCEK16}. A more compact proof uses relative
  approximations, as described next.
	
  We consider the case when $s = \frac{c}{\eps^2}\frac{n}{\kappa}\log
  n$. For the other case we have $S=P$ and $\kappa<(c/\eps^2)\log n$,
  so the claims trivially hold.

  Let $\R$ be the family of all boxes of area at most~$\alpha$. A
  subset $S\subseteq P$ is a \emph{relative
    $(\rho,\eps)$-approximation} for $(P,\R)$ if
  \[
  \forall R\in \R:~~~ 
  \left| \frac{|P\cap R|}{|P|}- \frac{|S\cap R|}{|S|} \right| ~\le~ 
  \eps\cdot \max\left\{\frac{|P\cap R|}{|P|},\rho \right\}.
  \]
  \mbox{Har-Peled} and Sharir~\cite[Theorem 2.11]{Har-PeledS11} show that the
  results of Li et al.~\cite{LiLS01} imply the following: A random
  sample of $P$ of size
  \[
  \frac{c'}{\eps^2\rho}\left( \delta \log \frac{1}{\rho}+\log
  \frac{1}{q}\right), 
  \]
  where $c'$ is an appropriate absolute constant, is a relative
  $(\rho,\eps)$-approximation for $(P,\R)$ with probability at least
  $1-q$. Here $\delta$ is the VC-dimension of the range space
  $(P,\R)$; in our case~$\delta \leq 4$. Setting $\rho=\kappa/n$ and
  $q=1/n$, we obtain that a random sample of size
  \[
  s= \frac{c'}{\eps^2\rho}\left( \delta \log \frac{1}{\rho}+\log
  \frac{1}{q}\right) ~\le~ 
  \frac{c'n}{\eps^2\kappa}\left( 4 \log \frac{n}{\kappa}+\log n\right)~\le~
  \frac{5 c'}{\eps^2\kappa} n \log n
  \]
  is a relative $(\kappa/n,\eps)$-approximation for $(P,\R)$ with
  probability at least~$1-1/n$. The constant~$c$ in the statement of
  the lemma is then~$5c'$.
  
  It remains to show that, if $S$ is a relative
  $(\kappa/n,\eps)$-approximation for $(P,\R)$, then both properties
  in the lemma hold. Since $S$ is a relative
  $(\kappa/n,\eps)$-approximation we have
  \[
  \forall R\in \R:~~~ 
  \left| \frac{|P\cap R|}{n}- \frac{|S\cap R|}{s} \right| ~\le~ 
  \eps\cdot \max\left\{\frac{|P\cap R|}{n},\frac{\kappa}{n} \right\} = 
  \eps\cdot \frac{\kappa}{n},
  \]
  where in the last step we used $|P\cap R|\le \optpoints(P,\alpha)\le
  \kappa$. This shows the first item.
  
  For the second item we note that, for any box~$R$ of area at
  most~$\alpha$, we have
  \begin{align*}
    |S\cap R|~\le~ s\big(\frac{|P\cap R|}{n} + \eps \cdot \frac{\kappa}{n}\big)
    ~\le~ 
    \frac{s}{n} \big( \kappa+ \eps \kappa \big)
    = (1+\eps) \frac{s\kappa}{n}
    = (1+\eps) \frac{c}{\eps^2}\log n.
  \end{align*}
  It follows that $\optpoints(S,\alpha)=O\big((1/\eps^2)\log n\big)$.
\end{proof}

\subsection[Approximation algorithm]{A $(1-\eps)$-approximation algorithm}
\label{ssec:appxalg}

We start by giving an output-sensitive $(1-\eps)$-approximation
algorithm whose running time depends quadratically on the size of the
output.
\begin{lemma}
  \label{lem:smallk}
  Given a set $P$ of $n$ points, a value $\alpha> 0$, and a parameter
  $\eps$ with $0<\eps<1$, we can compute in $O\left( n(\kappa^*)^2\log
  n \log (1/\eps) + n\log^2 n\right)$ time a box~$R$ of area at most
  $\alpha$ that covers at least $(1-\eps)\kappa^*$ points of $P$,
  where $\kappa^*=\optpoints(P,\alpha)$.
\end{lemma}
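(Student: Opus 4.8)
The plan is to combine the $4$-approximation from Theorem~\ref{thm:4approx} with a binary search that invokes the preprocessed structure of Theorem~\ref{thm:main1}. First I would run Theorem~\ref{thm:4approx} to obtain, in $O(n\log^2 n)$ time, a value $\kappa(P,\alpha)$ with $\optpoints(P,\alpha)/4\le \kappa(P,\alpha)\le \optpoints(P,\alpha)$; set $\hat\kappa = 4\kappa(P,\alpha)$, so that $\kappa^*\le \hat\kappa\le 4\kappa^*$. Now I would preprocess $P$ using Theorem~\ref{thm:main1} with the parameter $k=\hat\kappa = O(\kappa^*)$, costing $O(n\hat\kappa\log n + n\log^2 n) = O(n\kappa^*\log n + n\log^2 n)$ time. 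After this preprocessing, for any $k'\le \hat\kappa$ we can in $O(n(\hat\kappa)^2\log n) = O(n(\kappa^*)^2\log n)$ time compute a minimum-area rectangle covering at least $k'$ points of $P$, i.e. the value $\optarea(P,k')$ together with a witnessing rectangle.

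The core observation, stated in the introduction of the paper, is the equivalence $\optarea(P,k')\le \alpha \Longleftrightarrow \optpoints(P,\alpha)\ge k'$. Hence $\optpoints(P,\alpha)$ is exactly the largest $k'\le \hat\kappa$ with $\optarea(P,k')\le\alpha$, and this is a monotone predicate in $k'$. So I would binary search for the threshold: the search space $\{1,\dots,\hat\kappa\}$ has size $O(\kappa^*)$, so $O(\log\kappa^*) = O(\log n)$ steps suffice, each costing $O(n(\kappa^*)^2\log n)$ for one call to the query of Theorem~\ref{thm:main1}. That already finds $\optpoints(P,\alpha)$ exactly in $O(n(\kappa^*)^2\log^2 n + n\log^2 n)$ time — but that is a $\log n$ factor too slow and does not use the parameter $\eps$. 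The fix is to binary search not over all of $\{1,\dots,\hat\kappa\}$ but over a geometric-like grid: test the values $k'_j = \lceil(1-\eps)^{-j}\rceil$ for $j=0,1,2,\dots$ until $k'_j$ exceeds $\hat\kappa$; there are only $O(\log_{1/(1-\eps)}\hat\kappa) = O((1/\eps)\log\kappa^*)$ such values, which is still too many, so instead I would binary search within this geometric grid, which has $O(\log((1/\eps)\log\kappa^*)) = O(\log(1/\eps) + \log\log n)$ indices, giving $O(\log(1/\eps)+\log\log n)$ query calls. More simply, one runs an exponential-then-binary search directly on $k'$ to locate two consecutive grid values $k'_j < \kappa^* \le k'_{j+1}$ in $O(\log(1/\eps)\cdot\text{(cost of one test)})$ time once one controls for the $\log n$; the clean statement is that $O(\log(1/\eps))$ tests of the form ``$\optarea(P,k')\le\alpha$?'' suffice to find a value $\tilde k$ with $(1-\eps)\kappa^* \le \tilde k \le \kappa^*$, because consecutive grid values differ by a factor $1/(1-\eps)$ and there are $O((1/\eps)\log\kappa^*) = O((1/\eps)\log n)$ of them, on which a binary search takes $O(\log(1/\eps) + \log\log n) = O(\log(1/\eps))$ steps since $\log\log n$ is dominated once $\eps$ is not exponentially small (and otherwise $\kappa^* = O((1/\eps^2)\log n)$ handling is moot). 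Taking $\tilde k$ as the answer, I would return the rectangle of area $\optarea(P,\tilde k)\le\alpha$ that the last successful query produced; it covers $\tilde k \ge (1-\eps)\kappa^*$ points, as required.

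Putting the costs together: the $4$-approximation and the Theorem~\ref{thm:main1} preprocessing contribute $O(n\kappa^*\log n + n\log^2 n)$; the $O(\log(1/\eps))$ binary-search tests each cost $O(n(\kappa^*)^2\log n)$, for a total of $O(n(\kappa^*)^2\log n\log(1/\eps) + n\kappa^*\log n + n\log^2 n)$, and since $\kappa^*\ge 1$ the middle term is absorbed into the first, yielding the claimed $O(n(\kappa^*)^2\log n\log(1/\eps) + n\log^2 n)$. One should also note that all the areas returned by Theorem~\ref{thm:main1} are "$\le\alpha$" in the symbolically-perturbed sense, which matches the convention declared in the Notation paragraph (rectangles of area $\alpha + n\eps\rho$ are allowed), so the perturbation introduces no real issue.

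The step I expect to require the most care is the bookkeeping of the binary search so that its number of iterations is $O(\log(1/\eps))$ rather than $O(\log n)$ or $O((1/\eps)\log n)$: one must search over the \emph{indices} of the geometric grid $\{\lceil(1-\eps)^{-j}\rceil\}$ rather than over the $k'$-values themselves, and argue that landing between two consecutive grid indices pins down $\kappa^*$ to within a $(1-\eps)$ factor. A secondary subtlety is that the parameter passed to Theorem~\ref{thm:main1} must be an a priori upper bound on $\kappa^*$, which is exactly why the $4$-approximation is run first; if instead we only had $\kappa(P,\alpha)$ as a lower-bound-ish estimate we could not guarantee the preprocessed structure answers queries for the relevant $k'$. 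Finally, one checks the edge case $\kappa^* < $ some small constant (e.g. $\kappa^*\le 1$), where $(1-\eps)\kappa^*$ rounds down to a trivial target and any single point — or the minimum-area rectangle on one point, which has area $0\le\alpha$ — suffices.
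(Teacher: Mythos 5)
Your overall strategy is the same as the paper's: compute a $4$-approximation $\kappa_a$ to $\kappa^*$ via Theorem~\ref{thm:4approx}, preprocess with Theorem~\ref{thm:main1} using the upper bound $4\kappa_a=O(\kappa^*)$, and then binary-search on $k'$ using the monotone test ``$\optarea(P,k')\le\alpha$?''. Where you diverge is in the choice of the grid over which you binary-search, and there you have a genuine gap.

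You propose the geometric grid $k'_j=\lceil(1-\eps)^{-j}\rceil$ starting from $j=0$, i.e.\ from~$1$. Up to $\hat\kappa$ this grid has $\Theta((1/\eps)\log\kappa^*)$ points, so binary search over its indices costs $\Theta\bigl(\log(1/\eps)+\log\log\kappa^*\bigr)$ tests, not $O(\log(1/\eps))$. Your attempt to dismiss the $\log\log$ term does not go through: $\log\log n\le\log(1/\eps)$ requires $\eps\lesssim 1/\log n$, whereas the lemma must hold for \emph{all} $\eps\in(0,1)$, in particular for fixed constants such as $\eps=1/2$, where $\log(1/\eps)=O(1)$ but $\log\log n$ is unbounded. (If anything your parenthetical has the direction reversed: it is when $\eps$ is \emph{not} small that $\log\log n$ dominates.) The appeal to $\kappa^*=O((1/\eps^2)\log n)$ is also not available here: that bound holds only in the setting of Theorem~\ref{thm:main3}, where the lemma is applied to a random sample, not inside the lemma itself.

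The fix is to restrict the grid to the bracket you already have from the $4$-approximation. Since $\kappa_a\le\kappa^*\le 4\kappa_a$, you only need to locate $\kappa^*$ to within additive error $\eps\kappa_a\le\eps\kappa^*$ inside the interval $[\kappa_a,4\kappa_a]$. The paper uses the arithmetic grid $K=\{\kappa_a+i\cdot\eps\kappa_a\mid i\in\NN\}\cap[\kappa_a,4\kappa_a]$, which has only $O(1/\eps)$ points; binary search then takes $O(\log(1/\eps))$ tests, and the returned threshold $\tilde k$ satisfies $\kappa_a\le\tilde k\le\kappa^*\le\tilde k+\eps\kappa_a\le\tilde k+\eps\kappa^*$, hence $\tilde k\ge(1-\eps)\kappa^*$. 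A geometric grid \emph{restricted to} $[\kappa_a,4\kappa_a]$ would also have $O(1/\eps)$ points and would work equally well; the problem is only that you anchored yours at $1$ instead of at $\kappa_a$. Everything else in your argument --- the preprocessing cost $O(n\kappa^*\log n+n\log^2 n)$, the per-query cost $O(n(\kappa^*)^2\log n)$, the use of the equivalence $\optarea(P,k')\le\alpha\Leftrightarrow\kappa^*\ge k'$, and the remark that the $4$-approximation is needed precisely to get an a priori upper bound to pass to Theorem~\ref{thm:main1} --- matches the paper.
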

\begin{proof}
  Using Theorem~\ref{thm:4approx} we compute a $4$-approximation value
  $\kappa_a$ satisfying $\kappa^*/4 \le \kappa_a \le \kappa^*$. We
  apply Theorem~\ref{thm:main1} with the value $4\kappa_a$, which is
  an upper bound for $\kappa^*$. We spend $O(n\kappa_a \log n+n\log^2
  n)= O(n\kappa^* \log n+n\log^2 n)$ time in the preprocessing and
  then, for any given $k'\le \kappa_a$, we can compute
  $\optarea(P,k')$ in $O\left( n(\kappa^*)^2\log n \right)$ time.
	
  Consider the set $K$ of values of the form $\kappa_a+i\cdot \eps
  \kappa_a$, with $i\in \NN$, inside the interval
  $[\kappa_a,4\kappa_a]$. We perform binary search on~$K$ to find the
  value $\tilde k\in K$ such that $\optarea(P,\tilde k)\le \alpha$ but
  $\optarea(P,\tilde k+\eps \kappa_a)> \alpha$. We then have
  \[	
  \kappa_a ~\le~ \tilde k ~\le \optpoints ~\le~ \tilde k +\eps \kappa_a
  ~\le~ \tilde k +\eps \optpoints,
  \]
  and thus 
  \[
  \tilde k ~\ge~ \optpoints -  \eps \optpoints ~=~
  (1-\eps)\optpoints.
  \]
  
  Since $K$ has $O(1/\eps)$ values, the binary search performs $O(\log
  (1/\eps)$ steps. At each step we have to compute $\optarea(P,k')$
  for some value $k'\le 4\kappa_a$, which takes $O\left(
  n(\kappa^*)^2)\log n \right)$ time. In total, we spend $O\left(
  n(\kappa^*)^2)\log n \log (1/\eps)\right)$ time after $O(n\kappa^*
  \log n+n\log^2 n)$ preprocessing time.
\end{proof}

\begin{theorem}
  \label{thm:main3}
  Given a set of $n$ points $P$ in the plane and a value $\alpha> 0$,
  let $\optpoints(P,\alpha)$ be the maximum number of points from $P$
  that can be covered with a box of area~$\alpha$. Given a
  parameter~$\eps$, where $0\le \eps\le 1/2$, with probability at
  least $1-1/n$ we can find in $O\left((n/\eps^4) \log^3 n
  \log(1/\eps)\right)$ time a box~$\tilde R$ of area $\alpha$ that
  covers at least $(1-\eps)\optpoints(P,\alpha)$ points from $P$.
\end{theorem}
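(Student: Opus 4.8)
The plan is to assemble three tools developed above: the deterministic $4$-approximation of Theorem~\ref{thm:4approx}, the random-sampling estimates of Lemmas~\ref{lem:sampling2} and~\ref{lem:sampling3}, and the output-sensitive approximation algorithm of Lemma~\ref{lem:smallk}. Write $\kappa^*=\optpoints(P,\alpha)$ and assume $\eps>0$ (for $\eps=0$ the claimed time bound is vacuous). First I would run Theorem~\ref{thm:4approx} to obtain, in $O(n\log^2 n)$ time, a value $\kappa$ with $\kappa^*/4\le\kappa\le\kappa^*$. Set $\eps'=\eps/3$; since $\eps\le 1/2$ we have $0<\eps'\le 1/6<1/4$, and $1-3\eps'=1-\eps$. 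Put $\rho=\min\{1,(72/\kappa\eps'^2)\ln n\}$ and draw a $\rho$-sample $S\subseteq P$, exactly as in Section~\ref{sec:points} instantiated with parameters $\kappa$ and $\eps'$. If $\rho=1$ then $S=P$ and $\kappa^*\le 4\kappa=O((1/\eps^2)\log n)$, so applying Lemma~\ref{lem:smallk} directly to $P$ with parameter $\eps'$ already proves the theorem; hence assume $\rho<1$, which forces $\kappa\ge 1$ and makes the Section~\ref{sec:points} setup valid. Now apply Lemma~\ref{lem:smallk} to $S$ with parameter $\eps'$ to get a rectangle $R_0$ of area at most $\alpha$ covering at least $(1-\eps')\optpoints(S,\alpha)$ points of $S$, and finally enlarge $R_0$ to a rectangle $\tilde R\supseteq R_0$ of area exactly $\alpha$; enlarging never removes covered points, so it suffices to bound $|R_0\cap P|$.

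For correctness, condition on the events $\mathcal{E}_{\ge\kappa/4}$ and $\mathcal{E}_{<\kappa/4}$, which by Lemma~\ref{lem:sampling2} occur with probability at least $1-2/n^2$. Since $\kappa^*/4\le\kappa\le\kappa^*$ and $\eps'\le 1/4$, Lemma~\ref{lem:sampling3}(b) applies to $R_0$: as $R_0$ has area at most $\alpha$ and contains $(1-\eps')\optpoints(S,\alpha)$ points of $S$, it contains at least $(1-3\eps')\kappa^*=(1-\eps)\kappa^*$ points of $P$. Hence $|\tilde R\cap P|\ge|R_0\cap P|\ge(1-\eps)\optpoints(P,\alpha)$ and $\tilde R$ has area exactly $\alpha$, as required.

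For the running time, still conditioning on the two events, Lemma~\ref{lem:sampling3}(a) gives $\optpoints(S,\alpha)=O((1/\eps'^2)\log n)=O((1/\eps^2)\log n)$. Substituting $\kappa^*\mapsto\optpoints(S,\alpha)$ and $\eps\mapsto\eps'$ into Lemma~\ref{lem:smallk}, and using $|S|\le n$ together with $\log(1/\eps')=\Theta(\log(1/\eps))$, the call on $S$ runs in
\[
	O\!\left(n\cdot(\optpoints(S,\alpha))^2\log n\,\log(1/\eps)+n\log^2 n\right)
	~=~ O\!\left((n/\eps^4)\log^3 n\,\log(1/\eps)\right)
\]
time, where the $n\log^2 n$ term is absorbed because $\eps\le 1/2$. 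Adding the $O(n\log^2 n)$ of Theorem~\ref{thm:4approx} gives the claimed bound.

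The main obstacle is precisely that Lemma~\ref{lem:smallk} is output-sensitive: its running time depends on $\optpoints(S,\alpha)$, a random quantity whose useful bound $O((1/\eps^2)\log n)$ holds only after conditioning on $\mathcal{E}_{\ge\kappa/4}$ and $\mathcal{E}_{<\kappa/4}$; on the bad sample (probability at most $2/n^2$) the call could be far slower than the target bound. I would fix this by capping the invocation at a time budget of $O((n/\eps^4)\log^3 n\log(1/\eps))$: concretely, first compute via Theorem~\ref{thm:4approx} a $4$-approximation $\kappa_S$ to $\optpoints(S,\alpha)$ in $O(n\log^2 n)$ time, abort and return an arbitrary area-$\alpha$ rectangle unless $\kappa_S\le c(1/\eps^2)\log n$ for a suitable constant $c$, and only otherwise run Lemma~\ref{lem:smallk}. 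When the two events hold, $\kappa_S\le\optpoints(S,\alpha)=O((1/\eps^2)\log n)$, so for $c$ large enough no abort occurs and both analyses above apply; thus with probability at least $1-2/n^2$ the algorithm terminates within the stated time and returns a valid rectangle. One should also double-check the routine bookkeeping that every use of Lemmas~\ref{lem:sampling}--\ref{lem:sampling3} is made with the sampling parameter set to $\eps'=\eps/3$, not to $\eps$.
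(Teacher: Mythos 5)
Your proposal is correct and follows essentially the same route as the paper: a $4$-approximation via Theorem~\ref{thm:4approx}, a $\rho$-sample with parameter $\eps/3$, Lemma~\ref{lem:smallk} on the sample, and Lemmas~\ref{lem:sampling2}--\ref{lem:sampling3} for correctness and the size bound. Your extra safeguards (handling $\rho=1$, padding the rectangle to area exactly $\alpha$, and capping the run time on a bad sample) are sensible refinements of details the paper leaves implicit, but they do not change the argument.
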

\begin{proof}
  Using Theorem~\ref{thm:4approx} we compute in $O(n\log^2 n)$ time
  a value $\kappa_a$ satisfying 
  \[
  \optpoints(P,\alpha)/4 ~\le~ \kappa_a ~\le~ \optpoints(P,\alpha).
  \]
  Set $\kappa=4\kappa_a$, so that $\optpoints(P,\alpha)\le \kappa$,
  set $s=\min\{ n, \frac{c}{\eps^2}\frac{n}{\kappa}\log n\}$, and take
  a sample~$S$ of $P$ with~$s$ points. Henceforth we assume that $S$
  satisfies the properties of Lemma~\ref{lem:sampling}, which occurs
  with probability at least $1-1/n$.
  
  Using Lemma~\ref{lem:smallk} for the sample $S$, we compute a
  box~$\tilde R$ of area $\alpha$ covering at least $(1-\eps)
  \optpoints(S,\alpha)$ points of~$S$. We return~$\tilde R$.
	
  Let us analyze the running time of the algorithm. By
  Lemma~\ref{lem:sampling}, we have
  $\optpoints(S,\alpha)=O\big((1/\eps^2)\log n\big)$. This means that the
  algorithm of Lemma~\ref{lem:smallk} takes time
  \[
  O\left(|S| \left(\optpoints(S,\alpha)\right)^2\log |S| \log(1/\eps)
  + |S|\log^2 |S| \right).
  \]
  Substituting the value $\optpoints(S,\alpha)=O\big((1/\eps^2)\log n\big)$
  and $|S|\le n$ we get the time bound
  \[
  O\left(|S| \left((1/\eps^2)\log n\right)^2 \log |S| \log(1/\eps) +
  |S|\log^2 |S| \right) ~=~ 
  O\left((n/\eps^4) \log^3 n  \log(1/\eps)\right).
  \]
  
  Finally, we analyze the approximation error. Let $R^*$ be an
  optimal solution for $P$: A box of area $\alpha$ that contains
  $\optpoints(P,\alpha)$ points of~$P$. Since $\tilde R$ covers
  $(1-\eps) \optpoints(S,\alpha)$ points of the sample~$S$,
  Lemma~\ref{lem:sampling} implies that
  \begin{align*}
    \frac{|P\cap \tilde R|}{n} ~&\ge~
    \frac{|S\cap \tilde R|}{s} - \eps \frac{\kappa}{n}\\
    &\ge~ \frac{(1-\eps) \optpoints(S,\alpha)}{s} - \eps \frac{\kappa}{n}\\
    &\ge~ (1-\eps) \frac{|S\cap R^*|}{s} - \eps \frac{\kappa}{n}\\
    &\ge~ (1-\eps) \frac{|P\cap R^*|}{n} - \eps \frac{\kappa}{n} -
    \eps \frac{\kappa}{n}\\ 
    &=~ (1-\eps) \frac{\optpoints(P,\alpha)}{n} - 2 \eps \frac{\kappa}{n}.
  \end{align*}
  This means that 
  \[
  |P\cap \tilde R| \ge (1-\eps) \optpoints(P,\alpha)- 2 \eps \kappa,
  \]
  and using that $\kappa=4\kappa_a \le 4\optpoints(P,\alpha)$, we get
  \[
  |P\cap \tilde R| \ge (1-\eps) \optpoints(P,\alpha)- 2 \eps
  (4\optpoints(P,\alpha)) \ge (1-9\eps) \optpoints(P,\alpha). 
  \]
  Repeating the analysis with $\eps'=\eps/9$ in place of $\eps$, the
  result follows.
\end{proof}

\subsection{Can we make the algorithm deterministic?}
\label{sec:deterministic}

To make our approximation algorithm deterministic, we would need a
deterministic construction of a relative $(\rho,\eps)$-approximation
with respect to boxes. It is currently unclear if a relative
approximation of the desired size can be computed deterministically in
an efficient manner. Another option would be to use
$\eps$-approximations for boxes. Given a set of points $P$ in the
plane, a subset $A\subseteq P$ is a $\delta$-approximation\footnote{We
  use $\delta$ rather than~$\eps$ here to avoid confusion with the
  different roles of~$\eps$.} with respect to boxes if
\[
\forall \text{~boxes $R$}:~~~ 
\left| \frac{|R\cap P|}{|P|} - \frac{|R\cap A|}{|A|} \right| ~\le~ \delta.
\]
$\delta$-approximations are good for counting the number of points
inside each box with an error of~$\delta|P|$. After we have a
constant-factor approximation~$\kappa_a$ to the
value~$\optpoints(P,\alpha)$, we could thus use a
$\delta$-approximation with $\delta=\eps\kappa_a$. There are
$\delta$-approximations with respect to boxes of size
roughly~$O(1/\delta)$, which would be better than the random sample we
are currently using. However, constructing such a
$\delta$-approximation takes roughly $O(n/\delta^3)$ time; see
Phillips~\cite{Phillips08} for the latest results. For example, when
$\kappa_a=\Theta(\optpoints)=\Theta(\sqrt{n})$, this means that we
need roughly $\tilde O(n^{5/2})$ time. Thus, building
$\delta$-approximations deterministically in near-linear time is the
current bottleneck for this approach.

\section{Conclusions}

There are several avenues that can be pursued to improve our results:

Improving Lemma~\ref{lem:smallcase} directly improves our time bounds
for computing $\optarea(P,k)$. One approach would be to reduce the
problem of Lemma~\ref{lem:smallcase} to the following problem:
Maintain a set of $O(k)$ points on the real line under insertions such
that, after each insertion, we can recover the smallest interval that
contains $k$ of the points. Moreover, we know the order of the
insertions in advance. If we can handle each insertion in $o(k)$
time, then the result in Lemma~\ref{lem:smallcase} can be improved,
and consequently $\optarea(P,k)$ can be computed faster.

One may also try to compute the $k$ values $\optarea(P,1)$, $\dots$,
$\optarea(P,k)$ faster than using the algorithm for each $k'\in
\{1,\dots,k\}$ independently. In particular, if in
Lemma~\ref{lem:smallcase} we could compute all the values
$\Phi(Q,q,1), \Phi(Q,q,2), \dots, \Phi(Q,q,|Q|)$ in $o(|Q|^3)$ time,
then a better algorithm could be obtained for this problem.

The following additional open problems remain:
\begin{itemize}
\item Is it possible to derandomize the algorithm described in
  Section~\ref{sec:points} (see the discussion in
  Section~\ref{sec:deterministic})?.
\item In $\RR^3$, can we find the smallest box covering $k$ points in
  time roughly $O(nk^3)$?  Note that any running time of the form
  $\tilde O(nk^c)$, for some constant $c$, would lead to a
  near-linear-time randomized $(1-\eps)$-approximation algorithm for
  the dual problem of covering as many points as possible with a box
  of given volume.
\item Is there a non-trivial lower bound, such as $\Omega(nk)$, for
  computing $\optarea(P,k)$ exactly?
\end{itemize}

\section*{Acknowledgments}

Some parts of this work have been done at the Fourth Annual Workshop
on Geometry and Graphs, held at the Bellairs Research Institute in
Barbados, March~6--11, 2016. The authors are grateful to the
organizers and to the participants of the workshop, especially to Luis
Barba for suggesting the problem. We would also like to thank Xavier
Goaoc for fruitful discussions on the subject.

{\raggedright
\bibliographystyle{plain}
\bibliography{biblio}}

\begin{thebibliography}{10}

\bibitem{AgarwalHRSSW02}
Pankaj~K. Agarwal, Torben Hagerup, Rahul Ray, Micha Sharir, Michiel H.~M. Smid,
  and Emo Welzl.
\newblock Translating a planar object to maximize point containment.
\newblock In {\em Proc. 10th Annual European Symposium (ESA)}, volume 2461 of
  {\em Lecture Notes in Computer Science}, pages 42--53. Springer, 2002.

\bibitem{aiks-91}
Alok Aggarwal, Hiroshi Imai, Naoki Katoh, and Subhash Suri.
\newblock Finding {$k$} points with minimum diameter and related problems.
\newblock {\em J. Algorithms}, 12(1):38--56, 1991.

\bibitem{AhnBDDKKRS11}
Hee{-}Kap Ahn, Sang~Won Bae, Erik~D. Demaine, Martin~L. Demaine, Sang{-}Sub
  Kim, Matias Korman, Iris Reinbacher, and Wanbin Son.
\newblock Covering points by disjoint boxes with outliers.
\newblock {\em Comput. Geom.}, 44(3):178--190, 2011.

\bibitem{AronovES10}
Boris Aronov, Esther Ezra, and Micha Sharir.
\newblock Small-size $\varepsilon$-nets for axis-parallel rectangles and boxes.
\newblock {\em {SIAM} J. Comput.}, 39(7):3248--3282, 2010.

\bibitem{AronovH08}
Boris Aronov and Sariel Har{-}Peled.
\newblock On approximating the depth and related problems.
\newblock {\em {SIAM} J. Comput.}, 38(3):899--921, 2008.

\bibitem{CabelloDP13}
Sergio Cabello, Jos{\'{e}}~Miguel D{\'{\i}}az{-}B{\'{a}}{\~{n}}ez, and Pablo
  P{\'{e}}rez{-}Lantero.
\newblock Covering a bichromatic point set with two disjoint monochromatic
  disks.
\newblock {\em Comput. Geom.}, 46(3):203--212, 2013.

\bibitem{DasGN05}
Sandip Das, Partha~P. Goswami, and Subhas~C. Nandy.
\newblock Smallest {$k$}-point enclosing rectangle and square of arbitrary
  orientation.
\newblock {\em Inf. Process. Lett.}, 94(6):259--266, 2005.

\bibitem{DattaLSS95}
Amitava Datta, Hans{-}Peter Lenhof, Christian Schwarz, and Michiel H.~M. Smid.
\newblock Static and dynamic algorithms for {$k$}-point clustering problems.
\newblock {\em J. Algorithms}, 19(3):474--503, 1995.

\bibitem{BergCCEK16}
Mark de~Berg, Sergio Cabello, Otfried Cheong, David Eppstein, and Christian
  Knauer.
\newblock Covering many points with a small-area box.
\newblock {\em CoRR}, abs/1612.02149, 2016.

\bibitem{BergCH09}
Mark de~Berg, Sergio Cabello, and Sariel Har{-}Peled.
\newblock Covering many or few points with unit disks.
\newblock {\em Theory Comput. Syst.}, 45(3):446--469, 2009.

\bibitem{EfratSZ94}
Alon Efrat, Micha Sharir, and Alon Ziv.
\newblock Computing the smallest {$k$}-enclosing circle and related problems.
\newblock {\em Comput. Geom.}, 4:119--136, 1994.

\bibitem{epperi-94}
David Eppstein and Jeff Erickson.
\newblock Iterated nearest neighbors and finding minimal polytopes.
\newblock {\em Discrete Comput. Geom.}, 11(3):321--350, 1994.

\bibitem{Har-PeledM05}
Sariel Har{-}Peled and Soham Mazumdar.
\newblock Fast algorithms for computing the smallest {$k$}-enclosing circle.
\newblock {\em Algorithmica}, 41(3):147--157, 2005.

\bibitem{Har-PeledR15}
Sariel Har{-}Peled and Benjamin Raichel.
\newblock Net and prune: {A} linear time algorithm for {E}uclidean distance
  problems.
\newblock {\em J. {ACM}}, 62(6):44, 2015.

\bibitem{Har-PeledS11}
Sariel Har{-}Peled and Micha Sharir.
\newblock Relative $(p,\varepsilon)$-approximations in geometry.
\newblock {\em Discrete {\&} Computational Geometry}, 45(3):462--496, 2011.

\bibitem{KaplanRS17}
Haim Kaplan, Sasanka Roy, and Micha Sharir.
\newblock Finding axis-parallel rectangles of fixed perimeter or area
  containing the largest number of points.
\newblock In {\em Proc. 25th Annual European Symposium (ESA)}, volume~87 of
  {\em LIPIcs}, pages 52:1--52:13. Schloss Dagstuhl - Leibniz-Zentrum fuer
  Informatik, 2017.

\bibitem{LiLS01}
Yi~Li, Philip~M. Long, and Aravind Srinivasan.
\newblock Improved bounds on the sample complexity of learning.
\newblock {\em J. Comput. Syst. Sci.}, 62(3):516--527, 2001.

\bibitem{Matousek95}
Ji{\v r}{\'{\i}} Matou{\v s}ek.
\newblock On enclosing {$k$} points by a circle.
\newblock {\em Inf. Process. Lett.}, 53(4):217--221, 1995.

\bibitem{Phillips08}
Jeff~M. Phillips.
\newblock Algorithms for $\varepsilon$-approximations of terrains.
\newblock In {\em Proc. 35th International Colloquium, Automata, Languages and
  Programming, (ICALP)}, volume 5125 of {\em Lecture Notes in Computer
  Science}, pages 447--458. Springer, 2008.

\bibitem{segked-98}
Michael Segal and Klara Kedem.
\newblock Enclosing {$k$} points in the smallest axis parallel rectangle.
\newblock {\em Inform. Process. Lett.}, 65(2):95--99, 1998.

\bibitem{Willard85}
Dan~E. Willard.
\newblock New data structures for orthogonal range queries.
\newblock {\em {SIAM} J. Comput.}, 14(1):232--253, 1985.

\end{thebibliography}
\end{document}